\newtheorem{proposition}{Proposition}
\theoremstyle{remark}
\newtheorem{remark}{Remark}
\newtheorem{theorem}{Theorem}
\newcommand{\T}{\mathrm{Toep}}
\title{Maximum Likelihood-based Gridless DoA Estimation Using Structured Covariance Matrix Recovery and SBL with Grid Refinement}
\begin{document}
%
\author{Rohan~R.~Pote,~\IEEEmembership{Student~Member,~IEEE,}
        Bhaskar~D.~Rao,~\IEEEmembership{Fellow,~IEEE}
\thanks{This research was supported by ONR Grant No. N00014-18-1-2038, NSF grant CCF-2225617, and the UCSD Center for Wireless Communications.}
\thanks{The authors are with the Department of Electrical and Computer Engineering, University of California, San Diego, La Jolla, CA 92092, USA (rpote@eng.ucsd.edu, brao@eng.ucsd.edu).}
}
\maketitle

\begin{abstract}
    In this work, we consider the parametric data model employed in applications such as line spectral estimation and direction-of-arrival estimation. We focus on the stochastic maximum likelihood estimation (MLE) framework and offer approaches to estimate the parameter of interest in a \emph{gridless} manner, overcoming the model complexities of the past. This progress is enabled by the modern trend of reparameterization of the objective, and exploiting the sparse Bayesian learning (SBL) approach. The latter is shown to be a correlation-aware method, and for the underlying problem it is identified as a grid-based technique for recovering a structured covariance matrix of the measurements. For the case when the structured matrix is expressible as a sampled Toeplitz matrix, such as when measurements are sampled in time or space at regular intervals, additional constraints and reparameterization of the SBL objective leads to the proposed structured matrix recovery technique based on MLE. The proposed optimization problem is non-convex, and we propose a majorization-minimization based iterative procedure to estimate the structured matrix; each iteration solves a semidefinite program. We recover the parameter of interest in a gridless manner by appealing to the Carath\'eodory-F\'ejer result on decomposition of positive semidefinite (PSD) Toeplitz matrices. For the general case of irregularly spaced time or spatial samples, we propose an iterative SBL procedure that refines grid points to increase resolution near potential source locations, while maintaining a low per iteration complexity. We provide numerical results to evaluate and compare the performance of the proposed techniques with other gridless techniques, and the Cram\'er-Rao bound. The proposed correlation-aware approach is more robust to environmental/system effects such as low number of snapshots, correlated sources, small separation between source locations and improves sources identifiability.   
\end{abstract}
\begin{IEEEkeywords}
Sparse signal recovery, maximum likelihood, sparse Bayesian learning, gridless estimation, correlation-aware, structured matrix recovery, correlated sources, grid refinement
\end{IEEEkeywords}
\section{Introduction}
Consider the following parametric data model\begin{equation}
    \mathbf{y}_l=\mathbf{\Phi}_{\bm{\theta}}\mathbf{x}_l+\mathbf{n}_l,\qquad 0\leq l<L,\label{eq:parmodl}
\end{equation}where $\mathbf{y}_l\in\mathbb{C}^M$ denotes the measurements, and $L$ denotes the total number of snapshots available. The $k$th column of $\mathbf{\Phi}_{\bm{\theta}}\in\mathbb{C}^{M\times K}$ is a vector function of the parameter $\theta_k$ i.e., $[\bm{\Phi}_{\bm{\theta}}]_k=\bm{\phi}(\theta_k)$ for some known $\bm{\phi}(.),k\in\{1,\ldots,K\}$. $\bm{\theta}=[\theta_1,\ldots,\theta_K]^T$ and $\theta_k$'s lie in some known continuous domain. $K$ denotes the number of sources. The
sources' signal $\mathbf{x}_l\in\mathbb{C}^K$ and noise $\mathbf{n}_l\in\mathbb{C}^M$ are independent of each other, and i.i.d. over time. The noise, $\mathbf{n}_l$, is 
distributed as
$\mathcal{CN}(\mathbf{0},\sigma^2_n\mathbf{I})$. In (\ref{eq:parmodl}), the parameters $(\bm{\theta},\mathbf{x}_l,\sigma_n^2)$ are the unknowns. The model parameters affect the measurements in a non-linear manner, which makes the inverse problem extremely difficult to solve, even in the absence of noise. The above problem is ubiquitous, with applications including biomagnetic imaging \cite{gorodnitsky95}, functional approximations \cite{natarajan95}, and echo cancellation \cite{duttweiler00}. In this work we are concerned with problems such as in line spectral estimation\cite{stoicabook05} and direction-of-arrival (DoA) estimation\cite{krim96} for narrowband signals; we emphasize the latter as means for exposition.\setlength{\textfloatsep}{1pt}\begin{table}
    \centering
    \begin{tabular}{c}
    \hspace{-0.2cm}\begin{tabular}{c@{\hskip -0.35cm}|c}
        \hline
        \bf{Methods} & \bf{Primary Bottleneck}\cite{krim96}\\\hline
        \begin{tabular}{ll}
        \hspace{-0.3cm}(a) & \hspace{-0.55cm}\begin{tabular}{l}
            i. Spatial filtering (beamforming) \\
            ii. Subspace based methods
        \end{tabular}\end{tabular} & \hspace{-0.6cm}\begin{tabular}{l}
            Aperture/ degrees of freedom\\
            \emph{Number of snapshots}
        \end{tabular}\\\hline
        \begin{tabular}{ll}
        \hspace{-0.5cm}(b) & \hspace{-0.5cm}\begin{tabular}{l}
             Deterministic/ Stochastic MLE
        \end{tabular}\end{tabular} & \hspace{-0.05cm}Model \& computational complexity\\\hline
    \end{tabular}\medskip\\
    (a) Spectral based methods (b) Parametric methods\end{tabular}
    \caption{}\vspace{-0.3cm}\label{tab:traditional}
\end{table} Approaches to solve (\ref{eq:parmodl}) have a rich history and can broadly be classified as traditional vs. modern, both significant in insights and contributions.\\
\noindent\emph{On traditional approaches}: They can be further classified into spectral based\cite{capon69,schmidt86,roy89} and parametric methods\cite{krim96}. The typical ingredients to solve (\ref{eq:parmodl}) include geometrical properties (e.g., subspace orthogonality in MUltiple SIgnal Classification (MUSIC)\cite{schmidt86} or Estimation of Signal Parameters via Rotational Invariance Techniques (ESPRIT)\cite{roy89}) and statistical properties of the model in (\ref{eq:parmodl}). A common thread that unites these methods is the usage of the second order statistics of the data. A second order statistic offers benefits such as a) compact representation of the data when\footnote{This condition was rightfully pointed out by a reviewer.} $L\geq M$ (also, sample covariance matrix serves as a sufficient statistic when data is Gaussian distributed) b) model based interpretation of data with much fewer parameters. Parametric methods are particularly attractive as they do not suffer from the bottlenecks faced by beamforming and subspace based methods (summary in Table~\ref{tab:traditional})
. Parametric methods like maximum likelihood estimation (MLE) allow one to introduce meaningful parameters as a means to incorporate information about geometry and prior, which may be inferred even with a \emph{single snapshot}. The main issues with MLE methods are the model complexity, as the resulting cost function may be highly non-linear in the parameters to solve, and often the model order is unknown.
\\\noindent\emph{On modern approaches}: These techniques, under the rubric of sparse signal recovery (SSR), involve a) reparameterization of the original problem in (\ref{eq:parmodl}) b) explicit or implicit sparsity regularization and corresponding optimization problem. They recover the parameter of interest in either grid-based or grid-\emph{less} manner, and most often explicitly impose sparsity. Under the grid-based reparameterization\cite{mallat93,chen98,rao99,tropp06,malioutov05,wipf04}, the methods first discretize the possible values of $\theta$ and introduce the measurement matrix $\mathbf{\Phi}\in\mathbb{C}^{M\times G}$, $G$ denotes the grid size. The $i$-th column $[\mathbf{\Phi}]_i=\bm{\phi}(\theta_i),i=1,\ldots,G$, and $M\ll G$. The original problem in (\ref{eq:parmodl}) can be re-written as\begin{equation}
    \mathbf{y}_l=\mathbf{\Phi}\bar{\mathbf{x}}_l+\bar{\mathbf{n}}_l,\qquad 0\leq l<L,\label{eq:sblparmodl}
\end{equation}where it is known that $\bar{\mathbf{X}}=[\bar{\mathbf{x}}_0,\ldots,\bar{\mathbf{x}}_{L-1}]$ is row-sparse i.e., most of the rows are zero. The problem in (\ref{eq:sblparmodl}) is known as the multiple measurement vector or MMV problem when $L>1$ \cite{cotter05}, compared to the single measurement vector or SMV problem when just a single snapshot is available i.e., $L=1$. The non-zero rows correspond to active sources, and one of the key problems in SSR is to identify these non-zero rows. For the gridless approach\cite{tang13,steffens18,yang16,yang15} the reparameterization involves \emph{Toeplitz} matrix fitting of appropriate size. Note that modern techniques are applicable more generally even when there is no underlying parametric model, for example Gaussian random entries in $\mathbf{\Phi}$. Sparsity can be explicitly enforced by adding suitable $p$-pseudo-mixed norm\footnote{Note that for $p=1$ we get a norm, as it satisfies all the required axioms.} ($p\in(0,1]$), $\lVert\bar{\mathbf{X}}\rVert_{2,p}$, regularizer for the grid case or atomic norm for the gridless formulations. The core emphasis in these approaches is on optimizing an appropriate fit to the measurements with an additional (sparsity) regularizer \cite{chen98,tropp06,malioutov05,tang13,steffens18,yang16,yang15}. Such methods are therefore sensitive to setting the regularization parameter properly. An exception to the explicit regularization based methods includes sparse Bayesian learning (SBL) \cite{tipping01,wipf04,wipfrao07} which recovers sparse solutions for (\ref{eq:sblparmodl}) via implicit regularization\cite{wipf10}. SBL formulates the recovery problem under the MLE framework and therefore demonstrates superior performance.

\noindent The question we seek to answer is: \emph{how can we enhance the SBL formulation to overcome the model complexities faced by MLE methods of the past, and solve (\ref{eq:parmodl}) i.e., perform gridless estimation of $\bm{\theta}$?}
We identify the following contributions:\begin{itemize}[leftmargin=*]\item It was shown in \cite{pal15} that correlation-aware techniques effectively utilize available geometry and prior information and thus, can recover support as high as $O(M^2)$. In \cite{balkan14}, it was shown that SBL can indeed identify $O(M^2)$ sources in the noiseless case under certain sufficient conditions on the dictionary and sources, and was shown empirically in the noisy case. In this work we reexamine the SBL formulation and show that it places a similar emphasis on available structure i.e., geometry and prior information, and thus is a correlation-aware technique! \item We reformulate the SBL problem as a novel \emph{structured matrix recovery} (SMR) problem \emph{under the MLE framework}.
We will also show that the cost function employed by the proposed method can be derived using the Kullback-Leibler (KL) divergence between the true (data) distribution and the one assumed in this work. This insight provides a new perspective for understanding the underlying strategy to handle the case when sources may be arbitrarily correlated, extending the benefits of correlation-aware methods.\item A majorization-minimization (MM) procedure\cite{sun17} to minimize the negative log-likelihood function is provided.
One of the advantages of such an approach over other algorithms like sequential quadratic programming (SQP) is that more information is retained as we only majorize the concave terms in the cost. Thus, all information about third order and higher, of the convex terms is retained, unlike in SQP. Also, unlike SQPs where trust regions are required which limit progress per iteration, such conservative measures are prevented using convex-concave procedure (CCP)\cite{lipp16}. Thus, the linear MM procedure allows for more progress per iteration. We further discuss how array geometry can play an important role in identifying more sources than sensors. 
We also provide perspectives to understand the proposed approach and connect with the traditional MLE framework and the modern SBL formulation.\item Finally, we consider arbitrary geometries where it is difficult to identify simplifying structures, that are otherwise possible for array geometries such as uniform linear arrays (ULA) with potential missing sensors. For this case, we propose adaptive grid-based strategies to extend SBL to alleviate the initial grid limitation.\end{itemize}The proposed techniques set us apart from other family of approaches in the literature that albeit put together a cost function with a similar essence (i.e. $\mathrm{Simple\>Model+Data\>Fitting}$), but lack a (MLE) principled approach and hence the associated insights, performance guarantees and rich options. We provide numerical results to further elucidate the impact of the proposed techniques
and compare them with other gridless approaches and the Cram\'er-Rao bound (CRB). Some of the work presented here was also discussed in \cite{pote22} by the authors. We will now review some relevant prior work in this field.\begin{table*}
    \centering
    \begin{tabular}{l|c|c|c|c|c}
    \hline
        {\bf Methods} & {\bf \begin{tabular}{c}Applicable Array\\Geometries\end{tabular}} & {\bf Sparsity Regularizer} & {\bf Iterative} & {\bf \begin{tabular}{c}Optimization\\Tool$^*$\end{tabular}} & {\bf \begin{tabular}{c}Knowledge of\\noise variance $(\sigma_n^2)$\end{tabular}}\\\hline
        ANM\cite{tang13} & $\mbox{ULA}^\dagger$ & Atomic norm & No & SDP & known\\
        RAM\cite{yang16} & $\mbox{ULA}^\dagger$ & Reweighted atomic norm & Yes & SDP & known\\
        Gridless SPARROW\cite{steffens18,wassim17} & $\mbox{ULA}^\dagger$ & $\ell_{2,1}$ mixed norm & No & SDP & known\\
        Gridless SPICE\cite{yang15} & $\mbox{ULA}^\dagger$ & Implicit (trace norm) & No & SDP & {\it unknown}\\
        Proposed & {\it Non-uniform linear array}
        & Implicit ($\log\det$) & Yes & SDP & known$^{**}$\\\hline
    \end{tabular}\vspace{0.2cm}\\\begin{tabular}{l}
    $\mbox{ULA}^\dagger$ includes ULA with missing sensors' case. Non-uniform linear array includes $\mbox{ULA}^\dagger$ as a special case. $^*$First-order methods have been proposed\\for some of the above algorithms, although they were primarily derived as SDPs. $^{**}\sigma_n^2$ can be assumed unknown and estimated as part of the procedure.\vspace{0.2cm}
    \end{tabular}
    \caption{Summary of Gridless Sparse Signal Recovery Algorithms\vspace{-0.8cm}}
    \label{tab:gridlessssr}
\end{table*}\subsection{Relevant Prior Work}
\noindent Early works, primarily in the field of DoA estimation using the MLE based cost function include\cite{burg82,bohme86,miller87, jaffer88,fuhrmann88,li99}.
In \cite{burg82}, the authors proposed an iterative algorithm to solve the necessary gradient equations for moderate sized problems. An expectation-maximization (EM) based approach was proposed in \cite{miller87} wherein the incomplete observed data is assumed to have a Toeplitz structured covariance, and where it is shown that it is possible to embed the incomplete data into a larger size periodic data series. 
A separable solution, consisting of an optimization problem for recovering support and a closed form expression for estimating the source covariance matrix was proposed in \cite{jaffer88}, which was further extended to the case when noise variance is unknown in \cite{stoica95}. The problem was later considered in the presence of spatially correlated noise fields in \cite{viberg97}. A closed-form formula for estimating Hermitian Toeplitz covariance matrices using the extended invariance principle was suggested in \cite{li99}. A covariance matching based estimation to bypass the model complexity associated with the MLE based cost function was proposed in \cite{ottersten98}.
The approach developed in this paper can be viewed as a natural progression of this line of work, benefiting from the developments in the field and in optimization tools.

\noindent In \cite{tang13}, authors proposed a gridless scheme for estimating the frequency components of a mixture of complex sinusoids based on the concept of atomic norm \cite{chandrasekaran12}. They formulated a semidefinite program (SDP) which recovered a low rank Toeplitz matrix. Such a Toeplitz matrix can be further decomposed to identify the DoAs. In our work we similarly break the task into two steps. First, we recover a structured covariance matrix approximation for the sample covariance matrix (SCM). This recovery is based on the MLE cost function, unlike the work in \cite{tang13}. The second step is similar to that in \cite{tang13}. At each step we process the SCM, and do not process the received samples directly. As a result, the problem dimension is bounded, and results into a \emph{compact} formulation. A similar compact reformulation, called SPARse ROW-norm reconstruction (SPARROW), for the atomic norm minimization problem was proposed in \cite{steffens18}. The atomic-norm minimization (ANM) technique in \cite{tang13} builds on the mathematical theory of super-resolution developed by Cand\'es et al.\cite{candes14}, in that it extends to the cases of partial/compressive samples and/or multiple measurement vectors. ANM, however, requires sources to be adequately separated, prohibiting true super-resolution. A re-weighted ANM (RAM) strategy that potentially overcomes the shortfalls of ANM was proposed in \cite{yang16}. SParse Iterative Covariance-based Estimation (SPICE) was proposed in \cite{stoica11} as a grid-based sparse parameter estimation technique based on covariance matching, as opposed to the MLE formulation, and was later extended to the gridless case in \cite{yang15}. It was shown in \cite{yang15} that gridless SPICE and atomic norm-based techniques are equivalent, under varied assumptions of noise. LIKelihood-based Estimation of Sparse parameters (LIKES) \cite{stoica12} was proposed as a grid-based method following the MLE principle, with the same application as SPICE.
Table~\ref{tab:gridlessssr} summarizes recent gridless SSR approaches.
\subsection{Organization of the Paper and Notations}
\noindent In Section~\ref{sec:bgdcaw} we begin with a simple insight into SBL formulation, and demonstrate that SBL is a correlation-aware technique. We further compare SBL with another line of correlation-aware algorithms based on minimizing diversity measures. We take this insight further and present the structured matrix recovery (SMR) reformulation and highlight benefits of the proposed approach when sources may be arbitrarily correlated. In Section~\ref{sec:structcovmle}, we propose an iterative algorithm to solve the SMR problem. We consider both ULA without missing sensors and ULA wherein some sensors may be missing, in this section. We also connect the proposed SMR approach with the traditional MLE framework and the modern SBL formulation. In Section~\ref{sec:sblgdrefine}, we discuss the general case where sensors may be placed arbitrarily, and may not lie on a uniform grid. We present numerical results in Section~\ref{sec:sim} and conclude the work in Section~\ref{sec:conc}.

\noindent We represent scalars, vectors, and matrices by lowercase, boldface-lowercase, and boldface-uppercase letters, respectively. Sets are represented using blackboard bold letters. $(.)^T$ denotes transpose and $(.)^H$ denotes Hermitian of the operand matrix, and $(.)^c$ denotes element-wise complex conjugate.
$\odot$ denotes Khatri-Rao product between two matrices of appropriate sizes.
\section{SBL Revisited: Correlation Aware Interpretation, Robustness, and Structured Matrix Reformulation
}\label{sec:bgdcaw}
\noindent A correlation-aware technique\cite{pal12,pal15} satisfies the following three general requirements\footnote{To our knowledge, a formal set of requirements to be a `correlation-aware' technique is missing in literature. Thus, we propose these requirements based on the conditions for superior source identifiability reported in \cite{pal12,pal15,balkan14}.}: a) it depends on the measurements only through its second order statistics b) it assumes a
source correlation prior, usually that sources are uncorrelated, and fits a resulting \emph{structured} received signal covariance matrix to the second order statistics of the measurements c) any further inference is carried using the recovered parameters characterizing the estimated structured covariance matrix. In this work we assume that the sources are uncorrelated. This assumption may not always hold, and some sources may in fact be correlated. The impact of this mismatch between assumed model and true model is discussed at the end of this section.
The discussion highlights another aspect of the MLE framework, as it provides interpretable and superior results even in the mismatched model case.

\noindent For the purpose of simplicity, we focus on the ULA geometry in this section, and postpone the general case of ULAs with \emph{missing sensors} until next section. However, the insights presented here are applicable to the general case as well.
\subsection{On the SBL Algorithm}
\noindent SBL is a Bayesian technique to find a row-sparse decomposition of the received measurements, $\mathbf{Y}=[\mathbf{y}_0,\ldots,\mathbf{y}_{L-1}]$, (i.e., to solve the MMV problem in (\ref{eq:sblparmodl})) using an overcomplete dictionary $\mathbf{\Phi}\in\mathbb{C}^{M\times G}$ consisting of $G$ suitably chosen vectors (may be \emph{non-parametric} in general). In the DoA estimation problem, these vectors are array manifold vectors evaluated on a grid of angular space representing potential DoAs i.e., $\theta\in[-\frac{\pi}{2},\frac{\pi}{2})$ or $u\in[-1,1)$ in $u$-space. 
Note that there is a bijective mapping  $u=\sin\theta$ in the domains of interest\cite{trees02} and thus we use the two notations interchangeably. Consider a ULA with $M$ sensors and $d=\bar{\lambda}/2$ distance between adjacent sensors to prevent ambiguity in DoA estimation; $\bar{\lambda}$ denotes the wavelength of the incoming narrowband source signals. The array manifold vector for a source signal incoming at angle $u\in[-1,1)$, is given by $\bm{\phi}(u)=\left[1, \exp{(-j\pi u)},\ldots,\exp{(-j(M-1)\pi u)}\right]^T$.

\noindent SBL imposes a \emph{parameterized} Gaussian prior on the source signal $\bar{\mathbf{x}}_l\in\mathbb{C}^G$ as $\bar{\mathbf{x}}_l\sim\mathcal{CN}(\mathbf{0},\bm{\Gamma})$. Note that SBL explicitly imposes an \emph{uncorrelated sources} prior, and thus $\bm{\Gamma}$ is a diagonal matrix; let $\mathrm{diag}(\bm{\Gamma})=\bm{\gamma}$.
Thus we have $\mathbf{y}_l\sim\mathcal{CN}(\mathbf{0},\mathbf{\Phi}\bm{\Gamma}\mathbf{\Phi}^H+\lambda\mathbf{I})$, $\lambda$ denotes the estimate for noise variance. In the case with uninformative prior for $\bm{\gamma}$, the hyperparameter $\bm{\Gamma}$ and $\lambda$ can be estimated under the MLE framework\cite{wipfrao07} as\begin{equation}
    \underset{\mathbf{\Gamma}\succeq\mathbf{0},\>\lambda\geq 0}{\min}\log\det\left(\mathbf{\Phi\Gamma\Phi}^H+\lambda\mathbf{I}\right)+\mathrm{tr}\left(\left(\mathbf{\Phi\Gamma\Phi}^H+\lambda\mathbf{I}\right)^{-1}\hat{\mathbf{R}}_{\mathbf{y}}\right),\label{eq:sblopt}
\end{equation}where $\hat{\mathbf{R}}_{\mathbf{y}}=\frac{1}{L}\sum_{l=0}^{L-1}\mathbf{y}_l\mathbf{y}_l^H$ denotes the SCM.
Choices for solving the problem in (\ref{eq:sblopt}) include the Tipping iterations\cite{tipping01}, EM iterations\cite{wipf04}, sequential SBL\cite{tipping03}, and generalized approximate message passing (GAMP) implementations\cite{alshoukairi18,shengheng19}. A MM approach for solving (\ref{eq:sblopt}) was introduced in \cite{wipf08}. 
\begin{remark}
Note that if the number of sources $K$ is known exactly in (\ref{eq:parmodl}), such model order information is not used in the SBL formulation. Instead, the $\log\det$ penalty in (\ref{eq:sblopt}) helps to promote sparsity and to deal with small but unknown number of sources. If there is prior knowledge on $K$, then $\lVert\bm{\gamma}\rVert_0 = K$ would have to be imposed on the objective function.\end{remark}\noindent We now present the following useful insight.\begin{proposition}$\forall\bm{\gamma}\geq\mathbf{0}$ such that $(\mathbf{\Phi}\odot\mathbf{\Phi}^c)\bm{\gamma}=\mathbf{w}$, for some fixed $\mathbf{w}\in\mathbb{C}^{M^2}$, the SBL cost is a constant i.e.,\begin{equation*}
    \log\det\left(\mathbf{\Phi\Gamma\Phi}^H+\lambda\mathbf{I}\right)+\mathrm{tr}\left(\left(\mathbf{\Phi\Gamma\Phi}^H+\lambda\mathbf{I}\right)^{-1}\hat{\mathbf{R}}_{\mathbf{y}}\right)=C(\lambda),
\end{equation*}where $C(\lambda)$ is some constant.\vspace{-0.2cm}
\end{proposition}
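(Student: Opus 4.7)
The plan is to observe that the SBL objective depends on $\bm{\gamma}$ only through the $M\times M$ matrix $\mathbf{\Phi}\bm{\Gamma}\mathbf{\Phi}^H$, and that the linear constraint $(\mathbf{\Phi}\odot\mathbf{\Phi}^c)\bm{\gamma}=\mathbf{w}$ is precisely equivalent to fixing this matrix. Once both observations are in place, the claim follows immediately: for any two feasible $\bm{\gamma}_1,\bm{\gamma}_2$ both producing the same $\mathbf{w}$, the matrix $\mathbf{\Phi}\bm{\Gamma}\mathbf{\Phi}^H+\lambda\mathbf{I}$ is identical, and therefore so are $\log\det(\cdot)$ and $\operatorname{tr}((\cdot)^{-1}\hat{\mathbf{R}}_{\mathbf{y}})$, giving a value $C(\lambda)$ that depends only on $\mathbf{w}$, $\lambda$, and $\hat{\mathbf{R}}_{\mathbf{y}}$.

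The step that needs to be spelled out carefully is the equivalence between the Khatri--Rao linear form and the covariance matrix. Writing $\bm{\Gamma}=\operatorname{diag}(\bm{\gamma})$ and using $\mathbf{\Phi}=[\bm{\phi}_1,\ldots,\bm{\phi}_G]$, I would expand
\begin{equation*}
\mathbf{\Phi}\bm{\Gamma}\mathbf{\Phi}^H=\sum_{i=1}^{G}\gamma_i\,\bm{\phi}_i\bm{\phi}_i^H,
\end{equation*}
and then apply the vectorization identity $\operatorname{vec}(\mathbf{a}\mathbf{b}^H)=\mathbf{b}^c\otimes\mathbf{a}$ column by column to obtain
\begin{equation*}
\operatorname{vec}\!\bigl(\mathbf{\Phi}\bm{\Gamma}\mathbf{\Phi}^H\bigr)=\sum_{i=1}^{G}\gamma_i\bigl(\bm{\phi}_i^c\otimes\bm{\phi}_i\bigr)=(\mathbf{\Phi}^c\odot\mathbf{\Phi})\bm{\gamma}.
\end{equation*}
A conjugation (or equivalently a transpose of the matrix inside the $\operatorname{vec}$) then relates this to $(\mathbf{\Phi}\odot\mathbf{\Phi}^c)\bm{\gamma}$, matching the form in the hypothesis up to the usual ordering convention used in the paper.

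With that identification, fixing $(\mathbf{\Phi}\odot\mathbf{\Phi}^c)\bm{\gamma}=\mathbf{w}$ uniquely determines every entry of $\mathbf{\Phi}\bm{\Gamma}\mathbf{\Phi}^H$ (the map from the vectorized matrix back to the matrix is just an unstacking), so
\begin{equation*}
\mathbf{\Phi}\bm{\Gamma}\mathbf{\Phi}^H+\lambda\mathbf{I}\ \text{ is the same for every feasible }\bm{\gamma},
\end{equation*}
and both terms of the SBL cost are functions only of this matrix. Hence the cost equals a constant $C(\lambda)$ on the feasible set.

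I do not anticipate a real obstacle; the only thing one must be cautious about is pinning down the exact convention used for the Khatri--Rao product and the $\operatorname{vec}$ operator, so that the stated form $(\mathbf{\Phi}\odot\mathbf{\Phi}^c)\bm{\gamma}$ (rather than $(\mathbf{\Phi}^c\odot\mathbf{\Phi})\bm{\gamma}$) is recovered. This is a purely notational check and does not change the essence of the argument, which is that $\bm{\gamma}\mapsto \mathbf{\Phi}\bm{\Gamma}\mathbf{\Phi}^H$ factors through exactly the same linear map that defines the constraint.
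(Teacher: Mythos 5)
Your proof is correct and follows the same route as the paper: the paper's own argument is precisely the observation that fixing $(\mathbf{\Phi}\odot\mathbf{\Phi}^c)\bm{\gamma}=\mathbf{w}$ fixes every entry of $\mathbf{\Phi}\bm{\Gamma}\mathbf{\Phi}^H$, hence the whole cost. Your explicit vectorization identity $\operatorname{vec}(\mathbf{\Phi}\bm{\Gamma}\mathbf{\Phi}^H)=(\mathbf{\Phi}^c\odot\mathbf{\Phi})\bm{\gamma}$ merely spells out, up to the ordering convention, what the paper states in one line.
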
\begin{proof}
The proof follows simply by observing that $(\mathbf{\Phi}\odot\mathbf{\Phi}^c)\bm{\gamma}=\mathbf{w}$ implies $\mathbf{\Phi\Gamma\Phi}^H$ is a fixed \emph{structured} matrix with entries dictated by components of $\mathbf{w}$.\vspace{-0.2cm}
\end{proof}\noindent The above result demonstrates that, the hyperparameter $\bm{\gamma}$ affects the SBL cost function only through the entries of the \emph{structured covariance matrix} of the measurements. The sources are localized by peaks in the output $\bm{\gamma}$ pseudospectrum. This procedure satisfies the general requirements for correlation-aware algorithms. Thus, we conclude that SBL is indeed a correlation-aware technique. The procedure also marks some key requirements for superior sources' identifiability (see Theorem 1 and following remarks in \cite{balkan14}).
\subsection{Connecting to Correlation-Aware SSR Techniques based on Minimizing Diversity Measures}
\noindent Consider the class of problems given by\begin{IEEEeqnarray}{lll}
\min_{\mathbf{z}\geq\mathbf{0}}&f(\mathbf{z})&\label{eq:qiaocorraw}\\
\mathrm{subject\>to}\quad&\lVert\hat{\mathbf{r}}_{\mathbf{y}}-\mathbf{\Phi}_{KR}\mathbf{z}\rVert_2\leq\epsilon,&\nonumber
\end{IEEEeqnarray}where, $\hat{\mathbf{r}}_{\mathbf{y}}=\mathrm{vec}(\hat{\mathbf{R}}_{\mathbf{y}})$ and $\mathbf{\Phi}_{KR}=\mathbf{\Phi}^c\odot\mathbf{\Phi}$ denotes the Khatri-Rao product of $\mathbf{\Phi}$ with its conjugate. $f(\mathbf{z})$ is a sparsity promoting objective function and choices include $\ell_1$ norm, $\ell_0$ or $\ell_{1/2}$ as considered in \cite{qiao19}. The above problem satisfies the requirements for being correlation-aware, namely a) it matches the model to the second order statistics of the data b) uses uncorrelated sources' correlation prior to fit a structured matrix to the measurements c) further performs inference using the parameters of this estimated structured matrix. Next, we reformulate SBL as a constrained optimization problem to highlight the data-fitting term and to compare with (\ref{eq:qiaocorraw}).

\noindent The MLE optimization problem in (\ref{eq:sblopt}) can be reformulated as a constrained optimization problem as follows:\begin{IEEEeqnarray}{lcl}
\underset{\mathbf{\Gamma}\succeq\mathbf{0},\lambda\geq 0}{\min}&\quad&\log\det\left(\mathbf{\Phi\Gamma\Phi}^H+\lambda\mathbf{I}\right)\label{eq:toepcon}\\
\mathrm{subject\>to}&&\mathrm{tr}\left(\left(\mathbf{\Phi\Gamma\Phi}^H+\lambda\mathbf{I}\right)^{-1}\hat{\mathbf{R}}_{\mathbf{y}}\right)\leq\epsilon\nonumber.
\end{IEEEeqnarray}Note that the constraint imposes a Mahalanobis distance-based bound on the optimization variables. Another perspective to understand the data-fitting term above based on regularized least-squares fit to measurements can be found in \cite{wipf08}.
\begin{proposition}
Let $(\mathbf{\Gamma}^*,\lambda^{*})$ be a global minimizer of the optimization problem in (\ref{eq:sblopt}) such that $\lambda^{*}>0$. $(\mathbf{\Gamma}^*,\lambda^{*})$ globally minimizes problem in (\ref{eq:toepcon}) as well, if and only if $\epsilon=\mathrm{tr}\left(\left(\mathbf{\Phi\Gamma}^*\mathbf{\Phi}^H+\lambda^{*}\mathbf{I}\right)^{-1}\hat{\mathbf{R}}_{\mathbf{y}}\right)$.\label{prop:toepcon}
\end{proposition}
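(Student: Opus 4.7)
The plan is to prove the two directions separately, using only elementary arguments based on monotonicity of $\log\det$ in $\lambda$ and contradictions with the hypothesis that $(\mathbf{\Gamma}^*,\lambda^*)$ solves the penalized problem \eqref{eq:sblopt}. No Lagrangian duality theorem is needed; the statement is essentially the standard equivalence between a penalty-form problem and its constrained counterpart.

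For the ``if'' direction, I would assume the stated equality $\epsilon=\mathrm{tr}((\mathbf{\Phi}\mathbf{\Gamma}^*\mathbf{\Phi}^H+\lambda^*\mathbf{I})^{-1}\hat{\mathbf{R}}_{\mathbf{y}})$, which immediately makes $(\mathbf{\Gamma}^*,\lambda^*)$ a feasible point of \eqref{eq:toepcon} (with the constraint active). Then argue by contradiction: if some other feasible $(\tilde{\mathbf{\Gamma}},\tilde{\lambda})$ had strictly smaller $\log\det$, its feasibility gives $\mathrm{tr}((\mathbf{\Phi}\tilde{\mathbf{\Gamma}}\mathbf{\Phi}^H+\tilde{\lambda}\mathbf{I})^{-1}\hat{\mathbf{R}}_{\mathbf{y}})\le\epsilon$, so summing the two terms yields a value of the \eqref{eq:sblopt} objective at $(\tilde{\mathbf{\Gamma}},\tilde{\lambda})$ strictly below its value at $(\mathbf{\Gamma}^*,\lambda^*)$, contradicting global optimality in \eqref{eq:sblopt}. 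This direction needs no perturbation, only the comparison of the penalty with the constraint.

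For the ``only if'' direction, assume $(\mathbf{\Gamma}^*,\lambda^*)$ globally solves \eqref{eq:toepcon}; feasibility gives $\mathrm{tr}(\cdot)\le\epsilon$, and I want to rule out strict inequality. I would do this by exhibiting a strictly better feasible point whenever the constraint is slack. Holding $\mathbf{\Gamma}^*$ fixed and replacing $\lambda^*$ by $\lambda^*-\delta$ for small $\delta>0$ (permissible because $\lambda^*>0$), the objective strictly decreases, since $\tfrac{d}{d\lambda}\log\det(\mathbf{\Phi}\mathbf{\Gamma}^*\mathbf{\Phi}^H+\lambda\mathbf{I})=\mathrm{tr}((\mathbf{\Phi}\mathbf{\Gamma}^*\mathbf{\Phi}^H+\lambda\mathbf{I})^{-1})>0$. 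Meanwhile the constraint functional depends continuously on $\lambda$, so for $\delta$ small enough it remains $\le\epsilon$, contradicting global optimality of $(\mathbf{\Gamma}^*,\lambda^*)$ in \eqref{eq:toepcon}. Hence the constraint must be active, which is precisely the claimed equality.

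The main obstacle I anticipate is the ``only if'' direction, where the proof leans critically on the hypothesis $\lambda^*>0$: without it the downward perturbation in $\lambda$ is unavailable and one would need to perturb $\mathbf{\Gamma}^*$ instead (e.g., by scaling it toward $\mathbf{0}$), which is more delicate because simultaneous control of the $\log\det$ decrease and the trace constraint must be verified carefully. With $\lambda^*>0$ the single-variable perturbation gives the contradiction cleanly, and the only routine check needed is invertibility of $\mathbf{\Phi}\mathbf{\Gamma}^*\mathbf{\Phi}^H+\lambda^*\mathbf{I}$, which is immediate from $\lambda^*>0$.
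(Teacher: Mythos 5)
Your proof is correct and follows essentially the same route as the paper: the ``if'' direction by directly comparing the sum of the two terms at a hypothetically better feasible point, and the ``only if'' direction by perturbing $\lambda^*$ downward (the paper subtracts $\alpha\mathbf{I}$ from $\mathbf{\Phi\Gamma}^*\mathbf{\Phi}^H+\lambda^*\mathbf{I}$, which is the same move) and using continuity of the trace term to preserve feasibility when the constraint is slack. The only cosmetic difference is that the paper also notes the trivial case $\epsilon<\mathrm{tr}(\cdot)$ where the point is infeasible, which your framing absorbs automatically by assuming global optimality of the constrained problem up front.
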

\begin{proof}
    Proof in Appendix section~\ref{sec:app_prop}.
\end{proof}
\noindent The constraint in the formulation of (\ref{eq:toepcon}) allows to match the model to the observation (through the sample covariance matrix) and the objective function promotes a \emph{simpler} model to be picked. Note that the constrained optimization problem in (\ref{eq:toepcon}) is exactly MLE only when $\epsilon$ is set appropriately. The proposition indicates the difficulty in transforming the MLE to a constrained problem, although the latter can be explored as a viable option with $\epsilon$ set heuristically. This is not discussed further and left as future work. The above outlook only tries to highlight the two components of the SBL objective and allows one to compare the constrained formulation in (\ref{eq:toepcon}) to the other \emph{correlation-aware} technique in (\ref{eq:qiaocorraw}). The data fitting term in (\ref{eq:qiaocorraw}) lacks the MLE framework for data fitting used in (\ref{eq:toepcon}). This insight highlights one of the key difference between our approach and that used in many other works in the literature.

\noindent An alternative treatment of the SBL cost function that also reveals connections to reweighted $\ell_1$ and $\ell_2$ methods for finding sparse solutions to (\ref{eq:sblparmodl}) can be found in \cite{wipf08, wipf11, wipf10}.
\subsection{Proposed SMR Approach: ULA with No Missing Sensors}
\noindent The structure for $\mathbf{\Phi\Gamma\Phi}^H$ in the case of ULA is a Toeplitz matrix, and is informed by the array geometry and the uncorrelated sources prior. In other words, SBL attempts to find the `best' positive semidefinite (PSD) Toeplitz matrix approximation to the SCM $\hat{\mathbf{R}}_{\mathbf{y}}$. The grid-based formulation restricts the solution to lie in the union of PSD cones. We use this insight and reparameterize the SBL cost function to directly estimate the entries of the Toeplitz covariance matrix. Let $\mathbf{v}$ denote the first row of such a Toeplitz matrix, denoted by $\T(\mathbf{v})$. We reformulate the SBL optimization problem as\begin{equation}
\underset{\substack{\mathbf{v}\in\mathbb{C}^{M}\text{ s.t. }\\\T(\mathbf{v})\succeq\mathbf{0}, \lambda\geq 0}}{\min}\log\det\left(\T(\mathbf{v})+\lambda\mathbf{I}\right)+\mathrm{tr}\left((\T(\mathbf{v})+\lambda\mathbf{I})^{-1}\hat{\mathbf{R}}_{\mathbf{y}}\right).\label{eq:Toepcost}\end{equation}Once the solution $\mathbf{v}^*$ is obtained, we estimate the DoAs by decomposing the Toeplitz matrix, $\T(\mathbf{v^*})$. In our simulations we use root-MUSIC to estimate the DoAs \cite{barabell83}.
\begin{remark}It is known that a low rank ($D<M$) PSD Toeplitz matrix such as $\T(\mathbf{v}^*)$ can be uniquely decomposed as $\T(\mathbf{v}^*)=\sum_{i=1}^Dp_i\bm{\phi}(\theta_i)\bm{\phi}(\theta_i)^H,p_i>0$, and $\theta_i$'s are distinct
\cite{caratheodoryfejer11}. 
In (\ref{eq:Toepcost}), a low-rank solution is encouraged by the $\log\det$ term \cite{fazel03}, while its effect is being moderated by the additional noise variance term, `$+\lambda\mathbf{I}$'.\end{remark}\noindent The SBL formulation in (\ref{eq:sblopt}) not only finds a structured matrix fit to the measurements, it also \emph{factorizes} it. The same is true with the classical MLE approach, and is briefly discussed in Section~\ref{sec:reparamdiscuss}. The structured matrix factorization is a crucial step. In the proposed approach, we find a structured matrix in the MLE sense.
We therefore refer to the proposed approach as `StructCovMLE'. The problem in (\ref{eq:Toepcost}) is non-convex and we discuss an iterative algorithm to solve it, along with an extension to allow ULAs with missing sensors, in Section~\ref{sec:structcovmle}.

\noindent Next, we briefly discuss an important aspect of the chosen approach in (\ref{eq:Toepcost}) to solve the original problem in (\ref{eq:parmodl}).
\subsection{Performance under a Correlation Prior Mismatch}
\noindent We discuss the case when there is a prior misfit, between the assumed model and the actual (data) model. This insight is another feature resulting from the MLE formulation used by SBL as opposed to a regularization framework. In particular, we discuss the case when the sources may be arbitrarily correlated. As briefly mentioned before, in the case of a ULA, the structure SBL imposes by virtue of the array geometry and the (uncorrelated) source correlation prior is a Toeplitz matrix. If some of the sources are correlated, the approach fits Toeplitz structured covariance to a non-Toeplitz structure obeyed by the data. Our aim is not to correct but to quantify the model misfit. In particular, we show that the recovered Toeplitz fit to the SCM minimizes the KL divergence between the assumed and the true distribution.
 
\noindent Let $p_\mathbf{y}$ and $f_{\mathbf{y}\mid\bm{\Psi}}$ denote the true probability density function (pdf) and the pdf for the mismatched model, respectively, where $\bm{\Psi}=(\mathbf{v},\lambda)$ s.t. $\T(\mathbf{v})\succeq\mathbf{0},\lambda\geq 0$. Since the source and noise vectors are uncorrelated with each other, $p_\mathbf{y}$ is a zero mean Gaussian pdf with covariance matrix $\mathbf{R_y}=\mathbf{\Phi}_{\bm{\theta}}\mathbf{R_x}\mathbf{\Phi}_{\bm{\theta}}^H+\sigma_n^2\mathbf{I}$, where $\mathbf{R_x}$ denotes the source covariance matrix. Similarly $f_{\mathbf{y}\mid\bm{\Psi}}$ is zero mean Gaussian pdf with covariance $\mathbf{\Sigma_y}=\T(\mathbf{v})+\lambda\mathbf{I}$. The KL divergence between these two normal distributions is well known and is given by\begin{IEEEeqnarray}{rcl}
	D(p_\mathbf{y}\Vert f_{\mathbf{y}\mid\bm{\Psi}})&=\>&\log\det\mathbf{\Sigma_y}-\log\det\mathbf{R_y}-M+\mathrm{tr}(\mathbf{\Sigma_y}^{-1}\mathbf{R_y}).
\end{IEEEeqnarray}The effective optimization problem to minimize the KL divergence between the two distributions is given by\begin{equation}
    \mathbf{\Psi}^*=\underset{\mathbf{\Psi} \mbox{ s.t. }\T(\mathbf{v})\succeq\mathbf{0},\lambda\geq 0}{\text{argmin}}\>\log\det\bigl(\mathbf{\Sigma_y}\bigr)+\mathrm{tr}\bigl(\mathbf{\Sigma_y}^{-1}\mathbf{R_y}\bigr).\label{eq:ObFnmain}
\end{equation}Note that this optimization problem is similar to (\ref{eq:Toepcost}) for the proposed approach (or (\ref{eq:sblopt}) used within SBL), where instead of the actual received signal covariance matrix, $\mathbf{R_y}$, we used the SCM. Note that the SCM is the unconstrained/unstructured MLE estimate of the received signal covariance matrix. In \cite{pote20} it was shown for SBL using the two sources example and when the DoAs were known that, when sources are far apart, the estimate for the source powers under the uncorrelated model matches the true source power using the problem in (\ref{eq:ObFnmain}). Such a mismatched model was also used in \cite{wipf07} to propose more robust beamformers that can resist source correlation.

\section{Maximum Likelihood Structured Covariance Matrix Recovery}\label{sec:structcovmle}
\noindent We focus on ULA, first on the case with no missing sensors, and then on the case of ULA with missing sensors. We assume that the noise variance is known and set $\lambda=\sigma_n^2$ in (\ref{eq:Toepcost}), but it can be estimated as well, similar to $\mathbf{v}$ in this section.
\subsection{Uniform Linear Array Geometry}
\noindent Based on the concavity of the $\log\det$ term, we  majorize the $\log\det$ term in (\ref{eq:Toepcost}) and replace it with a linear term using its Taylor expansion\cite{sun17}\begin{IEEEeqnarray}{ll}
    \log\det\left(\T(\mathbf{v})\right.&\left.+\lambda\mathbf{I}\right)\leq\log\det\left(\T(\mathbf{v}^{(k)})+\lambda\mathbf{I}\right)\nonumber\\&+\mathrm{tr}\left((\T(\mathbf{v}^{(k)})+\lambda\mathbf{I})^{-1}\T(\mathbf{v}-\mathbf{v}^{(k)})\right),\IEEEeqnarraynumspace
\end{IEEEeqnarray}where $\mathbf{v}^{(k)}$ denotes the iterate value at the $k$th iteration. 
Note that the linear term from Taylor expansion provides a supporting hyperplane to the hypograph $\{(\mathbf{v},t): t<=\log\det(\T(\mathbf{v})+\lambda \mathbf{I}))\}$\cite{boyd04}. We ignore the constant terms above and get the following majorized objective function
\begin{equation*}
 \mathrm{tr}\left((\T(\mathbf{v}^{(k)})+\lambda\mathbf{I})^{-1}\T(\mathbf{v})\right)+\mathrm{tr}\left((\T(\mathbf{v})+\lambda\mathbf{I})^{-1}\hat{\mathbf{R}}_{\mathbf{y}}\right).
 \end{equation*}
 Rewriting second term above using Schur complement lemma:
 \begin{IEEEeqnarray}{lcl}
 \mathrm{tr}\left((\T(\mathbf{v})+\lambda\mathbf{I})^{-1}\hat{\mathbf{R}}_{\mathbf{y}}\right)\>&=&\underset{\mathbf{U}\in\mathbb{C}^{M\times M}}{\min}\mathrm{tr}\left(\mathbf{U}\>\hat{\mathbf{R}}_{\mathbf{y}}\right)\nonumber\\&&\mathrm{s.t.}\left[\begin{array}{cc}
    \mathbf{U} & \mathbf{I}_M \\
    \mathbf{I}_M & \T(\mathbf{v})+\lambda\mathbf{I} 
\end{array}\right]\succeq\mathbf{0},\qquad
\end{IEEEeqnarray}
which is a SDP. The overall optimization problem is convex and can be formulated as a SDP as follows\begin{IEEEeqnarray}{lcl}
\underset{\mathbf{v}\in\mathbb{C}^{M},\mathbf{U}\in\mathbb{C}^{M\times M}}{\min}&&\>\mathrm{tr}\left((\T(\mathbf{v}^{(k)})+\lambda\mathbf{I})^{-1}\T(\mathbf{v})\right)+\mathrm{tr}\left(\mathbf{U}\>\hat{\mathbf{R}}_{\mathbf{y}}\right)\nonumber\\\mathrm{subject\>to}&&\left[\begin{array}{cc}
    \mathbf{U} & \mathbf{I}_M \\
    \mathbf{I}_M & \T(\mathbf{v})+\lambda\mathbf{I} 
\end{array}\right]\succeq\mathbf{0},\T(\mathbf{v})\succeq\mathbf{0},\label{eq:sdptoep}
\end{IEEEeqnarray}and can be solved using any standard solvers (e.g. CVX solvers such as SDPT3, SeDuMi\cite{cvx}). It can be solved iteratively and we summarize the proposed steps in Algorithm~\ref{alg:glssr}. The following remark briefly discusses the choice of initialization.\begin{algorithm}
	\SetAlgoLined
	\KwResult{$\mathbf{v}^*$}
	\KwIn{Measurements: $\mathbf{Y}\in\mathbb{C}^{M\times L},\lambda=\sigma_n^2,\mathrm{ITER}$}
	Initialize: $\hat{\mathbf{R}}_{\mathbf{y}}=\mathbf{YY}^H/L, \mathbf{v}^*=\mathbf{e}_1=[1,0,\ldots,0]^T$\\
	\For{$k:=1\>\mathrm{to\>\mathrm{ITER}}$}{
	$\mathbf{v}^{(k)}\leftarrow\mathbf{v}^*$\\
	$\mathbf{v}^*\leftarrow\mbox{Solve the problem in}$~(\ref{eq:sdptoep})}
	\caption{Proposed `StructCovMLE' Algorithm}\label{alg:glssr}
\end{algorithm}\begin{remark}\label{rem:initial}We initialize the proposed algorithm with the unit vector $\mathbf{v}_0=\mathbf{e}_1$, following the suggestion in \cite{fazel03} for effective rank minimization. 
This initialization reduces the majorized term to a trace function in the first iteration. It is known that trace function is a convex envelope for the rank function for matrices with spectral norm less than one\cite{fazel01}. Furthermore, the iterative weighted trace minimization in the following iterations helps to \emph{preserve relevant signal components}.\end{remark}
\subsection{ULA with Missing Sensors}\noindent We begin by identifying the relevant \emph{structure} for the general case of ULAs with missing sensors. Consider a linear array with $M$ sensors on a grid with minimum inter-element spacing $d=\bar{\lambda}/2$. Let $\mathbb{P}=\{p_i\mid p_i\in\mathbb{Z}, 0\leq i< M\}$ denote the set of normalized (w.r.t. $d$) sensor positions. We assume $p_0=0$ without loss of generality. The array manifold vector is given by $\bm{\phi}(u)=\left[1, \exp{(-jp_1\pi u)},\ldots,\exp{(-jp_{M-1}\pi u)}\right]^T,u=\sin\theta$. The difference coarray is given by $\mathbb{D} = \{z\mid z=r-s,\>r,s\in\mathbb{P}\}.$ The concept of difference coarray influences the structure we seek to identify, and also arises naturally when computing the received signal covariance matrix. It represents the set of unique lags experienced by the physical array. 
The received signal covariance matrix under the SBL formulation is given by $\mathbf{\Phi\Gamma\Phi}^H+\lambda\mathbf{I}$, as discussed previously. The $(m,n)$ entry in $\mathbf{\Phi\Gamma\Phi}^H$ is given by $[\mathbf{\Phi\Gamma\Phi}^H]_{m,n}=\sum_{i=1}^G\>\gamma_i\exp{(-j(p_m-p_n)\pi u_i)}$, and $[\mathbf{\Phi\Gamma\Phi}^H]_{m,n}=[\mathbf{\Phi\Gamma\Phi}^H]_{n,m}^c$. Thus, $[\mathbf{\Phi\Gamma\Phi}^H]_{m,n}=[\mathbf{\Phi\Gamma\Phi}^H]_{m',n'},\forall$ tuples $(m,n)$ and $(m',n')$ such that $p_m-p_n=p_{m'}-p_{n'}$. In other words, the entries in $\mathbf{\Phi\Gamma\Phi}^H$ can be distinct only corresponding to distinct elements in $\mathbb{D}$. $\mathbf{\Phi\Gamma\Phi}^H$ is Hermitian symmetric, which further restricts the number of distinct entries. This reveals the underlying \emph{structure} that the model $\mathbf{\Phi\Gamma\Phi}^H$ satisfies, and we formalize it below.

\noindent Let $M_{\mathrm{apt}}$ denote the aperture of the array, $M_{\mathrm{apt}}=\max_{d\in\mathbb{D}}d+1\label{eq:aptdefn}$. We define a linear mapping $\mathbf{T}(\mathbf{v}):\mathbb{C}^{M_{\mathrm{apt}}}\rightarrow\mathbb{C}^{M\times M}$ as\begin{IEEEeqnarray}{rlr}
    [\mathbf{T}(\mathbf{v})]_{i,j}&=\left\{\begin{array}{ll}
        v_{\vert p_i-p_j\vert} &  j\geq i\\
        v_{\vert p_i-p_j\vert}^c & \text{otherwise}
    \end{array}\right.,&\> 0\leq i,j< M.\IEEEeqnarraynumspace
\label{eq:structmatdef}\end{IEEEeqnarray}
The mapping $\mathbf{T}(\mathbf{v})$ in general is many-to-one. It is only when the difference coarray has no holes, the mapping is one-to-one. For such cases we define $\mathbf{T}^{-1}(\mathbf{R}):\mathbb{C}^{M\times M}\rightarrow\mathbb{C}^{M_{\mathrm{apt}}}$ as a function that extracts the entries of a given structured matrix $\mathbf{R}$, formed using (\ref{eq:structmatdef}), to form a column vector. For the ULA with no missing sensors' case, we have $\mathbf{T}(\mathbf{v})=\T(\mathbf{v})$.\begin{figure}[h]
    \centering
    \begin{tabular}{cc}
    \includegraphics[width=0.4\linewidth]{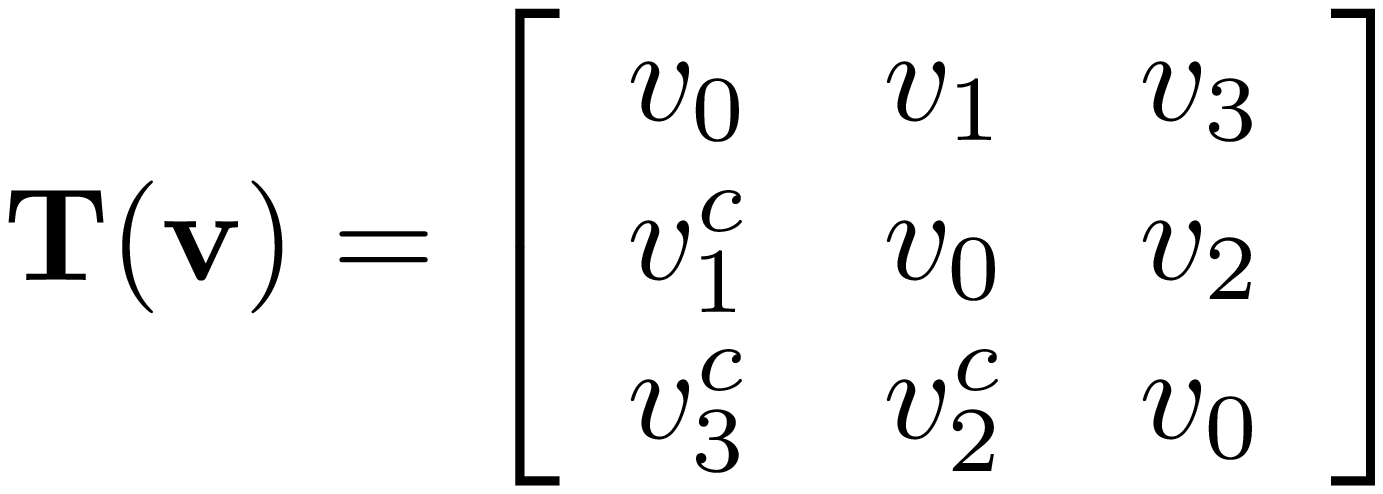} & \includegraphics[width=0.4\linewidth]{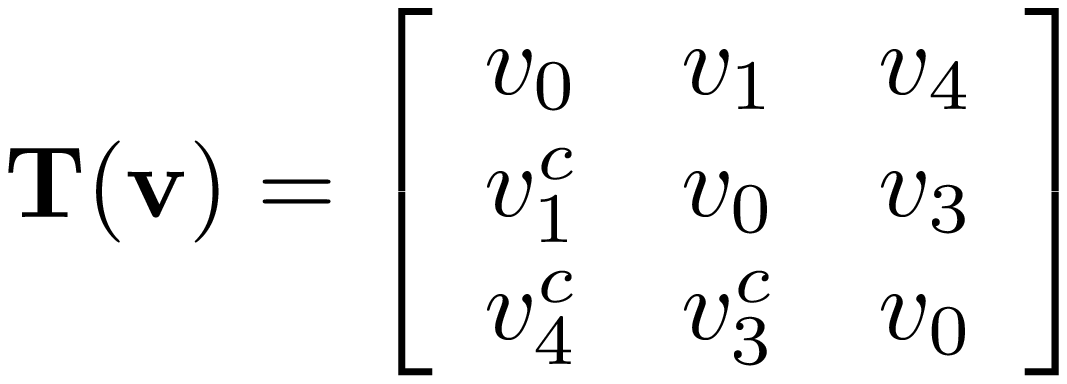}\\
    (a) & (b)\end{tabular}
    \caption{Structured Covariance Matrix $\mathbf{T}(\mathbf{v})$}
    \label{fig:Egstructcovmat}
\end{figure}\\\noindent{\bf Example 1:} Consider $\mathbb{P}=\{0,1,3\}$. This leads to $\mathbb{D}=\{-3,-2,-1,0,1,2,3\}$ and $M_{\mathrm{apt}}=4$. We therefore define $\mathbf{v}\in\mathbb{C}^4$, and the structured coavariance matrix as in Fig.~\ref{fig:Egstructcovmat}(a).
The mapping $\mathbf{T}(\mathbf{v})$ is one-to-one here and consequently $\mathbf{T}^{-1}$ is defined and we have $\mathbf{T}^{-1}(\mathbf{T}(\mathbf{v}))=[ v_0,v_1,v_2,v_3]^T$.\\
\noindent{\bf Example 2:} Consider $\mathbb{P}=\{0,1,4\}$. This leads to $\mathbb{D}=\{-4,-3,-1,0,1,3,4\}$ and $M_{\mathrm{apt}}=5$. We therefore define $\mathbf{v}\in\mathbb{C}^5$, and the structured coavariance matrix as in Fig.~\ref{fig:Egstructcovmat}(b).
The mapping $\mathbf{T}(\mathbf{v})$ is many-to-one here, as the component $v_2$ is missing in $\mathbf{T}(\mathbf{v})$. Consequently $\mathbf{T}^{-1}$ is not defined.

\noindent Thus, for the general case, (\ref{eq:sblopt}) can be reformulated as:
\begin{equation}
\underset{\substack{\mathbf{v}\in\mathbb{C}^{M_{\mathrm{apt}}}\text{ s.t. }\\\T(\mathbf{v})\succeq\mathbf{0},\lambda\geq 0}}{\min}\>\log\det\left(\mathbf{T}(\mathbf{v})+\lambda\mathbf{I}\right)+\mathrm{tr}\left((\mathbf{T}(\mathbf{v})+\lambda\mathbf{I})^{-1}\hat{\mathbf{R}}_{\mathbf{y}}\right).\label{eq:mlcost3}    
\end{equation}\begin{remark}
We would like to highlight a non-trivial choice made above of imposing $\T(\mathbf{v})\succeq\mathbf{0}$, instead of only requiring $\mathbf{T}(\mathbf{v})\succeq\mathbf{0}$. Note that the former constraint ensures that the latter is satisfied. The choice imposes a relevant constraint and is an important aspect of the model we wish to fit to the data in MLE sense. It also helps to connect the proposed reformulation to the traditional and modern MLE approaches, and is discussed 
in Section~\ref{sec:reparamdiscuss}.
\end{remark}\begin{remark}As in the case for SBL, if the number of sources, $K$, is known, a rank constraint $\mathrm{rank}(\T(\mathbf{v}))=K$ should be imposed. Since imposing a rank constraint is difficult, 
surrogate measures like in compressed sensing may be used, such as `$+\beta\log\det(\T(\mathbf{v})+\epsilon\mathbf{I})$' as a regularizer in
(\ref{eq:mlcost3}) to further promote sparse solutions. In this work, we do not exploit knowledge of $K$ to solve
(\ref{eq:mlcost3}).\end{remark}\noindent Like in the previous case of ULA with no missing sensors, we majorize the cost function in (\ref{eq:mlcost3}) to get a convex function and rewrite it as a SDP, assuming knowledge of noise variance and setting $\lambda=\sigma_n^2$. The majorized objective is given by\begin{equation}
\mathrm{tr}\left((\mathbf{T}(\mathbf{v}^{(k)})+\lambda\mathbf{I})^{-1}\mathbf{T}(\mathbf{v})\right)+\mathrm{tr}\left((\mathbf{T}(\mathbf{v})+\lambda\mathbf{I})^{-1}\hat{\mathbf{R}}_{\mathbf{y}}\right).\label{eq:nulacost}
\end{equation}The resulting SDP is given below\begin{IEEEeqnarray}{lcl}
\underset{\mathbf{v}\in\mathbb{C}^{M_{\mathrm{apt}}},\mathbf{U}\in\mathbb{C}^{M\times M}}{\min}&&\>\mathrm{tr}\left((\mathbf{T}(\mathbf{v}^{(k)})+\lambda\mathbf{I})^{-1}\mathbf{T}(\mathbf{v})\right)+\mathrm{tr}\left(\mathbf{U}\>\hat{\mathbf{R}}_{\mathbf{y}}\right)\label{eq:genarr}\IEEEeqnarraynumspace\\\mathrm{subject\>to}&&\left[\begin{array}{cc}
    \mathbf{U} & \mathbf{I}_M \\
    \mathbf{I}_M & \mathbf{T}(\mathbf{v})+\lambda\mathbf{I} 
\end{array}\right]\succeq\mathbf{0},\T(\mathbf{v})\succeq\mathbf{0}, \nonumber   
\end{IEEEeqnarray}where $\mathbf{v}^{(k)}$ denotes the value at the $k$th iteration. Steps similar to Algorithm~\ref{alg:glssr} can be followed to find the optimal point $\mathbf{v}^*$. To estimate the DoAs we perform root-MUSIC on $\mathbf{T}(\mathbf{v}^*)$.\begin{remark}
It was shown in \cite{pillai85} that sparse arrays with a larger number of consecutive lags than the number of sensors, $M$, can identify more sources than $M$. Under the proposed approach, a similar higher identifiability can be achieved by instead performing root-MUSIC on $\T(\mathbf{v}^*)$, and we numerically verify this in Section~\ref{sec:moresources}.\vspace{-0.1cm}\end{remark}
\subsection{On Proposed Method: From MLE to SBL}\label{sec:reparamdiscuss}
\noindent We connect the proposed technique with the classical MLE framework and the grid SBL formulation. We hope to answer the following question: {\it how has the reparameterization affected the original problem in (\ref{eq:parmodl}) of solving for $\bm{\theta}$?}\\{\it 1) Connection with the classical MLE formulation:} 
We begin by first stating the traditional MLE formulation. In this approach, we impose a \emph{parametrized} Gaussian prior on $\mathbf{x}_l$ i.e., $\mathbf{x}_l\sim\mathcal{CN}(\mathbf{0},\mathbf{P})$. Note that an explicit knowledge of model order information is a requisite here. We further assume that the sources are uncorrelated, and thus $\mathbf{P}$ is a diagonal matrix. The resulting optimization problem is given by\begin{IEEEeqnarray}{ll}
\underset{\bm{\theta}\in[-\frac{\pi}{2},\frac{\pi}{2})^K,\mathbf{P}\succ\>\mathbf{0},\lambda\geq0}{\min}\>\log\det&\left(\mathbf{\Phi}_{\bm{\theta}}\mathbf{P\Phi}_{\bm{\theta}}^H+\lambda\mathbf{I}\right)\nonumber\\&+\mathrm{tr}\left((\mathbf{\Phi}_{\bm{\theta}} \mathbf{P\Phi}_{\bm{\theta}}^H+\lambda\mathbf{I})^{-1}\hat{\mathbf{R}}_{\mathbf{y}}\right).\IEEEeqnarraynumspace\label{eq:mlcost}    
\end{IEEEeqnarray}
The model is also referred to as the \emph{unconditional model} in the DoA literature\cite{stoica90}, compared to the \emph{conditional model} where $\mathbf{x}_l$ is assumed deterministic.
Consider the following updated MLE optimization problem:\begin{IEEEeqnarray}{ll}
\underset{\substack{K\in\mathbb{Z}^+\\0<K<M_{\mathrm{apt}}}}{\min}\>\underset{\bm{\theta}\in[-\frac{\pi}{2},\frac{\pi}{2})^K,\mathbf{P}\succ\>\mathbf{0},\lambda\geq0}{\min}\>&\log\det\left(\mathbf{\Phi}_{\bm{\theta}}\mathbf{P\Phi}_{\bm{\theta}}^H+\lambda\mathbf{I}\right)\nonumber\\&+\mathrm{tr}\left((\mathbf{\Phi}_{\bm{\theta}} \mathbf{P\Phi}_{\bm{\theta}}^H+\lambda\mathbf{I})^{-1}\hat{\mathbf{R}}_{\mathbf{y}}\right).\IEEEeqnarraynumspace\label{eq:mlcostKopt}
\end{IEEEeqnarray}The difference with the traditional MLE formulation is that, in the above we consider all model orders, $0<K<M_{\mathrm{apt}}$, to optimize the cost function. We then have the following result.
\begin{theorem}
The problem in (\ref{eq:mlcost3}) and in (\ref{eq:mlcostKopt}) are equivalent, in that they achieve the same globally minimum cost.\label{thm:mleeqprop}\end{theorem}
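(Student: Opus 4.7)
The plan is to prove the two problems have the same global minimum by constructing a cost-preserving correspondence between their feasible sets. The central tool is the Carath\'eodory-F\'ejer theorem that the paper has already invoked for DoA extraction: any $M_{\mathrm{apt}} \times M_{\mathrm{apt}}$ PSD Toeplitz matrix $\T(\mathbf{v})$ of rank $K < M_{\mathrm{apt}}$ admits a unique Vandermonde decomposition $\T(\mathbf{v}) = \sum_{i=1}^K p_i \bm{\phi}_{\mathrm{full}}(\theta_i)\bm{\phi}_{\mathrm{full}}(\theta_i)^H$ with $p_i>0$ and distinct $\theta_i$, where $\bm{\phi}_{\mathrm{full}}(\theta)$ is the steering vector of the full aperture ULA. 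The key geometric observation is that $\mathbf{T}(\mathbf{v})$ is exactly the principal submatrix of $\T(\mathbf{v})$ indexed by the sensor positions $\mathbb{P}$; letting $\mathbf{E}$ be the corresponding $M\times M_{\mathrm{apt}}$ selection matrix we have $\mathbf{E}\bm{\phi}_{\mathrm{full}}(\theta) = \bm{\phi}(\theta)$ (the sparse-array manifold vector), and therefore $\mathbf{T}(\mathbf{v}) = \mathbf{E}\,\T(\mathbf{v})\,\mathbf{E}^T = \mathbf{\Phi}_{\bm{\theta}}\mathbf{P}\mathbf{\Phi}_{\bm{\theta}}^H$ with $\mathbf{P}=\mathrm{diag}(p_1,\ldots,p_K)$. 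This identity is what transports the cost functions of the two problems onto each other.

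For the direction $f^*_{(\ref{eq:mlcostKopt})} \ge f^*_{(\ref{eq:mlcost3})}$, I would take any feasible $(K,\bm{\theta},\mathbf{P},\lambda)$ of (\ref{eq:mlcostKopt}) with $K<M_{\mathrm{apt}}$ and set $\T(\mathbf{v}) := \sum_{i=1}^K p_i \bm{\phi}_{\mathrm{full}}(\theta_i)\bm{\phi}_{\mathrm{full}}(\theta_i)^H$, which is automatically PSD Toeplitz, so $(\mathbf{v},\lambda)$ is feasible for (\ref{eq:mlcost3}); by the identity above the two cost values coincide. For the reverse direction $f^*_{(\ref{eq:mlcost3})} \ge f^*_{(\ref{eq:mlcostKopt})}$, I would take any feasible $(\mathbf{v},\lambda)$ and apply Carath\'eodory-F\'ejer to $\T(\mathbf{v})$ (which is PSD by the constraint, the crucial reason the paper imposes $\T(\mathbf{v})\succeq \mathbf{0}$ rather than just $\mathbf{T}(\mathbf{v})\succeq \mathbf{0}$) to extract $(K,\bm{\theta},\mathbf{P})$, which together with $\lambda$ is feasible for (\ref{eq:mlcostKopt}) and again gives the same cost. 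Combining the two inequalities yields equality of the global minima.

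The main obstacle is the boundary case $K=M_{\mathrm{apt}}$. Carath\'eodory-F\'ejer gives a clean unique decomposition only for $K<M_{\mathrm{apt}}$; if the optimal $\T(\mathbf{v}^\ast)$ happens to have rank exactly $M_{\mathrm{apt}}$, the construction above would require $K=M_{\mathrm{apt}}$, which is excluded from (\ref{eq:mlcostKopt}). I would handle this in one of two ways. The clean route is to argue that at any global minimum the optimal $\T(\mathbf{v}^\ast)$ must be rank-deficient: the $\log\det$ term in the objective is strictly concave in the eigenvalues, so on the boundary one can always trade a small amount of an eigenvalue of $\T(\mathbf{v})$ for an equivalent increase in $\lambda$ while weakly improving the objective, driving an eigenvalue to zero. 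The backup route, if the previous argument fails in edge cases, is a density/limiting argument: any rank-$M_{\mathrm{apt}}$ PSD Toeplitz matrix can be approached by rank-$(M_{\mathrm{apt}}-1)$ PSD Toeplitz matrices, the cost is continuous (since $\lambda\ge 0$ keeps $\mathbf{T}(\mathbf{v})+\lambda\mathbf{I}$ well-conditioned at the minimum when $\lambda>0$, and the boundary $\lambda=0$ case reduces trivially), so the infima agree even if they are not attained on the excluded boundary.
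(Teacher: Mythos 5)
Your overall strategy coincides with the paper's: both proofs observe that the cost depends on the optimization variables only through the modeled covariance matrix, and then establish a cost-preserving correspondence between the two search domains, using the sampled-Toeplitz identity $\mathbf{T}(\mathbf{v})=\mathbf{E}\,\T(\mathbf{v})\,\mathbf{E}^T=\mathbf{\Phi}_{\bm{\theta}}\mathbf{P}\mathbf{\Phi}_{\bm{\theta}}^H$ in one direction and the Carath\'eodory--F\'ejer/Vandermonde decomposition in the other. The one place you diverge is the full-rank boundary case, and there the paper's treatment is cleaner and worth adopting: rather than reasoning about what the \emph{optimizer} must look like, it maps \emph{every} feasible point of (\ref{eq:mlcost3}) by invoking the exact Carath\'eodory parameterization of a full-rank PSD Toeplitz matrix, $\T(\mathbf{v}'')=\mathbf{\Phi}_{\bm{\theta}'',\mathrm{ULA}}\mathbf{P}''\mathbf{\Phi}_{\bm{\theta}'',\mathrm{ULA}}^H+\lambda''\mathbf{I}$ with $K''<M_{\mathrm{apt}}$ and $\lambda''>0$, and absorbing $\lambda''$ into the noise variable of (\ref{eq:mlcostKopt}) --- hence the ``equivalent up to an additional $+\tilde{\lambda}\mathbf{I}$'' framing. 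Your route (a) arrives at the same place but cites the wrong mechanism: the trade is not a ``weak improvement'' enabled by strict concavity of $\log\det$, it is an exact invariance, since replacing $\mathbf{v}$ by $\mathbf{v}-\alpha\mathbf{e}_1$ and $\lambda$ by $\lambda+\alpha$ leaves $\mathbf{T}(\mathbf{v})+\lambda\mathbf{I}$, and therefore the cost, unchanged while lowering every eigenvalue of $\T(\mathbf{v})$ by $\alpha$; taking $\alpha=\lambda_{\min}(\T(\mathbf{v}))$ lands on a singular PSD Toeplitz matrix to which Carath\'eodory--F\'ejer applies. With that correction your route (a) becomes exactly the paper's argument, and the density/limiting fallback in route (b) is unnecessary.
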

\begin{proof}Proof is provided in the Appendix section~\ref{prf:mleeqprop}.\end{proof}
\noindent{\it 2) Connection with SBL in (\ref{eq:sblopt}):} Consider the following updated SBL optimization problem:
\begin{equation}
    \underset{\mathbf{\Phi}}{\min}\>\>\underset{\mathbf{\Gamma}\succeq\mathbf{0},\lambda\geq 0}{\min}\>\>\log\det\left(\mathbf{\Phi\Gamma\Phi}^H+\lambda\mathbf{I}\right)+\mathrm{tr}\left(\left(\mathbf{\Phi\Gamma\Phi}^H+\lambda\mathbf{I}\right)^{-1}\hat{\mathbf{R}}_{\mathbf{y}}\right),\label{eq:sbloptphiopt}
\end{equation}where we also allow all possible dictionaries $\mathbf{\Phi}$ with array manifold vectors as columns, to optimize the cost function. The following result follows similarly.\begin{theorem}
The problem in (\ref{eq:mlcost3}) and in (\ref{eq:sbloptphiopt}) are equivalent, in that they achieve the same globally minimum cost.\label{thm:sbleqprop}\vspace{-0.2cm}
\end{theorem}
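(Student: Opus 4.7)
The plan is to prove the equivalence by a bidirectional construction that exploits the fact that both objective functions depend on the unknowns only through a PSD received-signal covariance of the form $\lambda\mathbf{I}+\sum_i p_i \bm{\phi}(\theta_i)\bm{\phi}(\theta_i)^H$. The argument parallels that of Theorem~\ref{thm:mleeqprop}, and one can in fact chain through it since (\ref{eq:sbloptphiopt}) and (\ref{eq:mlcostKopt}) parametrize essentially the same family of covariance models: the active columns of $\mathbf{\Phi}$ (those with $\gamma_i>0$) play the role of $\mathbf{\Phi}_{\bm{\theta}}$, and the corresponding nonzero entries of $\mathbf{\Gamma}$ play the role of $\mathbf{P}$.

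For $(\ref{eq:mlcost3}) \leq (\ref{eq:sbloptphiopt})$, I would start with any feasible triple $(\mathbf{\Phi},\mathbf{\Gamma},\lambda)$ whose dictionary $\mathbf{\Phi}$ has columns $\bm{\phi}(\theta_1),\ldots,\bm{\phi}(\theta_G)$ and $\mathbf{\Gamma}=\mathrm{diag}(\gamma_1,\ldots,\gamma_G)\succeq\mathbf{0}$, lift each atom to its full-aperture counterpart $\bm{\phi}_{\mathrm{full}}(\theta_i)\in\mathbb{C}^{M_{\mathrm{apt}}}$ (i.e., the array manifold evaluated at positions $0,1,\ldots,M_{\mathrm{apt}}-1$), and set $\T(\mathbf{v}):=\sum_i \gamma_i\bm{\phi}_{\mathrm{full}}(\theta_i)\bm{\phi}_{\mathrm{full}}(\theta_i)^H$. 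This matrix is PSD Toeplitz by construction, and since $\mathbf{T}(\cdot)$ in (\ref{eq:structmatdef}) extracts precisely the entries of $\T(\mathbf{v})$ at the physical sensor positions, we have $\mathbf{T}(\mathbf{v})=\mathbf{\Phi\Gamma\Phi}^H$, so the costs agree. For the reverse direction, I would take any feasible $(\mathbf{v},\lambda)$ for (\ref{eq:mlcost3}) and invoke the Carath\'eodory-F\'ejer decomposition to write $\T(\mathbf{v})=\sum_{i=1}^D p_i\bm{\phi}_{\mathrm{full}}(\theta_i)\bm{\phi}_{\mathrm{full}}(\theta_i)^H$ with $p_i>0$ and distinct $\theta_i$; restricting each atom to the physical array and setting $\mathbf{\Phi}=[\bm{\phi}(\theta_1),\ldots,\bm{\phi}(\theta_D)]$ with $\mathbf{\Gamma}=\mathrm{diag}(p_1,\ldots,p_D)$ yields $\mathbf{\Phi\Gamma\Phi}^H=\mathbf{T}(\mathbf{v})$, producing a feasible point for (\ref{eq:sbloptphiopt}) with matching objective value.

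The main obstacle is the standard subtlety with Carath\'eodory-F\'ejer when $\T(\mathbf{v})$ has full rank $M_{\mathrm{apt}}$, where the classical uniqueness form does not apply directly; however, existence of at least one decomposition into rank-one Vandermonde atoms still holds (with possibly more than $M_{\mathrm{apt}}-1$ terms), which is enough here because (\ref{eq:sbloptphiopt}) imposes no a priori bound on the dictionary size $G$. Once existence is granted in both rank regimes, each construction preserves the MLE cost value and the two global minima coincide. An even shorter alternative is to invoke Theorem~\ref{thm:mleeqprop} directly and observe that (\ref{eq:sbloptphiopt}) and (\ref{eq:mlcostKopt}) agree after the atom-active/tuple identification noted above (the strict positivity $\mathbf{P}\succ\mathbf{0}$ and the bound $K<M_{\mathrm{apt}}$ in (\ref{eq:mlcostKopt}) are recovered by discarding zero $\gamma_i$'s and applying Carath\'eodory-F\'ejer), which immediately yields Theorem~\ref{thm:sbleqprop}.
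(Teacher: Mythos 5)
Your proposal is correct and follows essentially the same route as the paper: both directions reduce the equivalence to showing that the two covariance search domains coincide (up to an added $\tilde{\lambda}\mathbf{I}$), lifting each atom to the full-aperture ULA Toeplitz matrix for one inclusion and applying the Vandermonde/Carath\'eodory--F\'ejer decomposition followed by restriction to the physical sensor positions for the other. Your explicit handling of the full-rank case (existence of a non-unique decomposition with $M_{\mathrm{apt}}$ atoms, which suffices since the dictionary size $G$ is unconstrained) is a slightly more careful rendering of the paper's remark that the decomposition ``may not be unique.''
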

\begin{proof}The proof follows similarly as for Theorem~\ref{thm:mleeqprop}, and we present it it Appendix section~\ref{prf:sbleqprop} for completion.
\end{proof}\noindent The above results help to understand the proposed approach in (\ref{eq:mlcost3}): (\ref{eq:mlcost3}) estimates a structured covariance matrix fit to the measurements in the MLE sense over all model orders for classical MLE or all appropriate dictionaries for SBL.\\\noindent The entries of a structured matrix and noise variance may be combined as presented in \cite{qiao17}. However, the choice of explicitly involving $\lambda$ parameter has two important consequences: a) If $\sigma_n^2$ is known, the proposed approach allows a mechanism to feed this information, which is absent in\cite{qiao17} b) If $\sigma_n^2$ is unknown, a better learning strategy to estimate the noise variance and then feeding it as part of the model may result in better DoA estimates than jointly estimating $\bm{\theta}$ and $\sigma_n^2$. Finally, although the optimization problem in \cite{qiao17} and the proposed are similar, an algorithm for solving it is missing in \cite{qiao17}. During the preparation of this manuscript we came across another recent work in \cite{yang22}
which derives from the \emph{classical} MLE formulation. Consequently, it involves the non-linear rank constraint which is implemented by a truncated eigen-decomposition step. In contrast, the presented algorithm builds on the success of SBL algorithm and relaxes the rank constraint, similar to SBL. The presented approach also guarantees that the likelihood increases over the iterations. $K$ is utilized for root-MUSIC. 
\cite{yang22} 
focuses on sparse linear arrays i.e., sensors on grid. We, however, consider the general non-uniform linear array case as well, and is discussed next.
\section{Gridless SBL with Likelihood-based Grid Refinement}\label{sec:sblgdrefine}
\noindent 
In this section, we consider the case when sensors may be placed arbitrarily on a linear aperture. The presented ideas can be extended to other shapes or higher dimensional (2D, 3D, etc.) geometries. Note in both the sections we assume that the sensor positions are known. Issues concerning calibration errors is not the focus here, and we request interested readers to check the relevant literature for tackling such issues\cite{trees02}.

\noindent Consider an array with sensor positions $\mathbb{P}=\{0,1,2.1,3.5$ $,4.7,10\}$. The difference coarray for this geometry is given by $\mathbb{D}=\{0,1,1.1,1.2,1.4,2.1,2.5,2.6,3.5,3.7,4.7,5.3$ $,6.5,7.9,9,10\}$. The structured received signal covariance matrix is neither Toeplitz, nor is sampled from a higher order Toeplitz matrix. This implies that (\ref{eq:mlcost3}), where we enforced a Toeplitz PSD constraint, is not applicable. Similarly, the second step wherein we estimate DoAs in a gridless manner using root-MUSIC is not applicable. Finally, the number of distinct lags is $\vert\mathbb{D}\vert=M(M+1)/2-(M-1)=16$ which essentially enforces structure only on the diagonal entries, in that, they be equal. This indicates poor availability of structural constraints on the received signal covariance, compared to the geometries where sensors are present on a uniform grid. Note that the SBL formulation in (\ref{eq:sblopt}) is devoid of such limitations. Using the recovered $\bm{\Gamma}^*$ one can construct a Toeplitz matrix of order $\lfloor M_{\mathrm{apt}}\rfloor$, where $\lfloor\cdot\rfloor$ indicates the floor function, and beyond although the accuracy may not be reliable as the measurements lack information about larger lags. This highlights the versatility with which SBL can handle arbitrary array geometries. 
However, as we already know that SBL does not quite solve (\ref{eq:parmodl}) that we ventured out to solve in the first place, because the DoAs may not lie on the chosen grid. One can employ a very fine grid, but the per iteration computational complexity increases linearly with the grid size. We extend the SBL procedure to progressively refine the initial uniform coarse grid by adding more points near potential source locations. We achieve this 
in two steps: (a) Grid point adjustment around peaks, in the solution $\bm{\gamma}^*$ of (\ref{eq:sblopt}) using sequential SBL\cite{tipping03} to simultaneously update both grid point and power estimate (b) Multi-resolution grid refinement. Note that the latter builds on the former step by re-running SBL after the local 
step (a), pruning, and increasing grid resolution near top peaks in the $\bm{\gamma}$ pseudospectrum. As will be shown next, the grid point adjustment around peaks in $\bm{\gamma}$ pseudospectrum is a computationally simpler procedure to further increase the likelihood after SBL iterations on a coarse grid.\subsection{Grid Point Adjustment around Peaks in Solution $\bm{\gamma}^*$ of (\ref{eq:sblopt})}
\noindent We begin by rewriting the SBL objective function to separate out the $i$-th grid component characterized by the tuple $(\gamma_i,u_i)$; $u=\sin\theta$ is used here. Let $\mathbf{C}=\mathbf{\Phi}\bm{\Gamma}\mathbf{\Phi}^H+\lambda\mathbf{I}$ and $\mathbf{C}_{-i}=\mathbf{\Phi}_{-i}\bm{\Gamma}_{-i}\mathbf{\Phi}_{-i}^H+\lambda\mathbf{I}$, where $\mathbf{\Phi}_{-i}$ denotes the dictionary without the $i$-th column in $\mathbf{\Phi}$, and $\bm{\Gamma}_{-i}$ denotes the matrix without the
the $i$-th row and the $i$-th column 
in $\bm{\Gamma}$. Then\begin{equation}
\mathcal{L}(\bm{\gamma})=\log\det\mathbf{C}+\mathrm{tr}\left(\mathbf{C}^{-1}\hat{\mathbf{R}}_{\mathbf{y}}\right)
=\mathcal{L}(\bm{\gamma}_{-i})+L(\gamma_i,u_i),\label{eq:sbloptisplit}
\end{equation}where $\mathcal{L}(\bm{\gamma}_{-i})=\log\det\mathbf{C}_{-i}+\mathrm{tr}\left(\mathbf{C}_{-i}^{-1}\hat{\mathbf{R}}_{\mathbf{y}}\right)$ is devoid of $(\gamma_i,u_i)$, and $L(\gamma_i,u_i)=\log(1+\gamma_i\mathbf{\Phi}_i^H\mathbf{C}_{-i}^{-1}\mathbf{\Phi}_i)-\frac{\mathbf{\Phi}_i^H\mathbf{C}_{-i}^{-1}\hat{\mathbf{R}}_{\mathbf{y}}\mathbf{C}_{-i}^{-1}\mathbf{\Phi}_i}{\gamma_i^{-1}+\mathbf{\Phi}_i^H\mathbf{C}_{-i}^{-1}\mathbf{\Phi}_i}$ (see eq.~(18) in \cite{tipping03} for detailed derivation of (\ref{eq:sbloptisplit})). Let $i_k^{(0)},k\in\{1,\ldots,K\}$ denote the indices for the $K$ top peaks in the $\bm{\gamma}$ pseudospectrum. The index superscript $(.)$ in $i_k^{(0)}$ indicates the iteration number of the overall two step procedure, and will be discussed more in the next subsection. The idea is to fix the first term in RHS of (\ref{eq:sbloptisplit}) and minimize the objective $L(\gamma,u)$ with respect to $(\gamma,u),u\in[u_{i_k^{(0)}}-\delta,u_{i_k^{(0)}}+\delta],k=\{1,\ldots,K\}$, one peak at a time; the bound\footnote{Note that future iterations may involve non-uniform grid, and a similar bound on either directions is used to avoid grid point overlap.} $\delta<1/G$ is to avoid grid point overlap. In other words, the aim is to solve\begin{equation}
\underset{u\in[u_{i_k^{(0)}}-\delta,u_{i_k^{(0)}}+\delta]}{\min}\>\>\>\underset{\gamma\geq 0}{\min}\>L(\gamma,u)=\log(1+\gamma s(u))-\frac{q(u)}{\gamma^{-1}+s(u)}\label{eq:neighopt}\IEEEeqnarraynumspace\end{equation}where $q(u)=\bm{\phi}(u)^H\mathbf{C}_{-i}^{-1}\hat{\mathbf{R}}_{\mathbf{y}}\mathbf{C}_{-i}^{-1}\bm{\phi}(u)$ and $s(u)=\bm{\phi}(u)^H\mathbf{C}_{-i}^{-1}\bm{\phi}(u)$. The minimization with respect to $\gamma$ for a fixed $u$ can be obtained in closed-form as\begin{equation}
\gamma_{\mathrm{opt}}(u)=\left\{\begin{array}{cc}
            \frac{q(u)-s(u)}{s(u)^2} & q(u)>s(u) \\
            0 & q(u)\leq s(u)
        \end{array}\right..
\end{equation}And thus we have\begin{equation}
        L(\gamma_{\mathrm{opt}}(u),u)=\left\{\begin{array}{cc}
            \log\left(\frac{q(u)}{s(u)}\right)-\frac{q(u)}{s(u)}+1 & q(u)>s(u) \\
            0 & q(u)\leq s(u)
        \end{array}\right..
\end{equation}Note that $\log\left(\frac{q(u)}{s(u)}\right)-\frac{q(u)}{s(u)}+1\leq 0,\forall u$, and is equal to zero only when $q(u)=s(u)$ for some $u$.
\begin{figure}
    \centering
    \includegraphics[width=\linewidth]{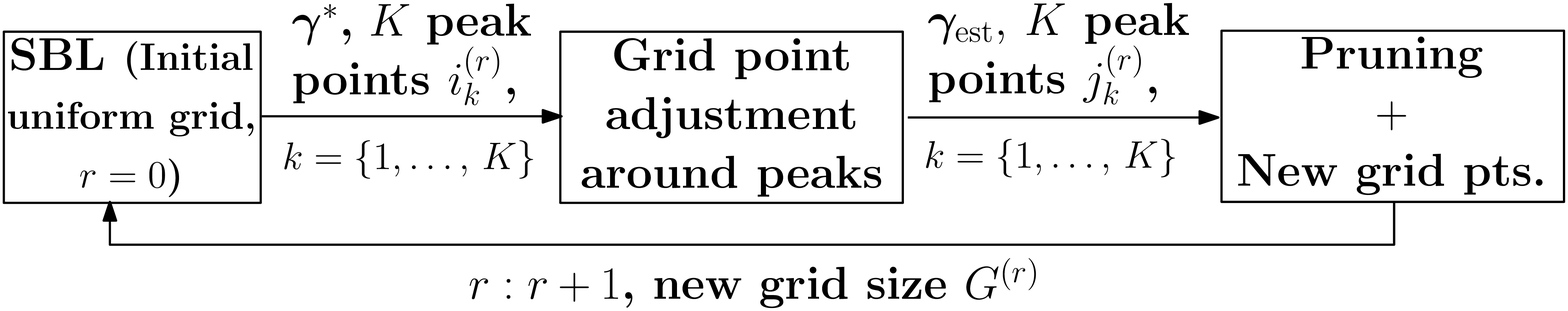}
    \caption{Proposed SBL with likelihood-based grid refinement procedure}
    \label{fig:twostepproc}
\end{figure}Consequently, we are interested in $u\in[u_{i_k^{(0)}}-\delta,u_{i_k^{(0)}}+\delta]$ such that $q(u)>s(u)$. For such points $L(\gamma_{\mathrm{opt}}(u),u)$ is a monotonic non-increasing function of `$\frac{q(u)}{s(u)}$', and thus the problem in (\ref{eq:neighopt}) reduces to the following problem\begin{equation}
u^*=\underset{u\in[u_{i_k^{(0)}}-\delta,u_{i_k^{(0)}}+\delta]\mbox{ s.t. }q(u)>s(u)}{\arg\max}\quad R(u)=\frac{q(u)}{s(u)}.\label{eq:neighopt2}
\end{equation}We provide the following perspective to understand the objective we wish to locally maximize.\begin{IEEEeqnarray}{ll}
R(u)&=\frac{q(u)}{s(u)}=\frac{\bm{\phi}(u)^H\mathbf{C}_{-i}^{-1}\hat{\mathbf{R}}_{\mathbf{y}}\mathbf{C}_{-i}^{-1}\bm{\phi}(u)}{\bm{\phi}(u)^H\mathbf{C}_{-i}^{-1}\bm{\phi}(u)}\nonumber\\
&=\frac{\bm{\phi}(u)^H\mathbf{C}_{-i}^{-1}\hat{\mathbf{R}}_{\mathbf{y}}\mathbf{C}_{-i}^{-1}\bm{\phi}(u)/\left(\bm{\phi}(u)^H\mathbf{C}_{-i}^{-1}\bm{\phi}(u)\right)^2}{1/\left(\bm{\phi}(u)^H\mathbf{C}_{-i}^{-1}\bm{\phi}(u)\right)},\IEEEeqnarraynumspace
\end{IEEEeqnarray}which is the ratio of (numerator) actual beamforming total output power and (denominator) expected beamforming interference plus noise output power, where the model interference plus noise signal covariance is given by $\mathbf{C}_{-i}$. The expression utilizes the beamformer $\mathbf{w}=\frac{\mathbf{C}_{-i}^{-1}\bm{\phi}(u)}{\left(\bm{\phi}(u)^H\mathbf{C}_{-i}^{-1}\bm{\phi}(u)\right)}$, which represents a minimum variance distortionless response (MVDR) beamformer with $\mathbf{C}_{-i}$ as the model interference plus noise signal covariance matrix\cite{trees02,alshoukairi21}. 
Thus the criterion $R(u)$ in (\ref{eq:neighopt2}) picks a $u$ in the neighbourhood of $u_{i_k^{(0)}}$ that most exceeds the expected beamforming interference plus noise output power, guided by $\mathbf{C}_{-i}$. At the true location, which is likely to be in the search region, the model $\mathbf{C}_{-i}$ expects low power but hopefully the measurements indicate higher power than expected.

\noindent We solve (\ref{eq:neighopt2}) by implementing a fine grid of size $G'$
around the peak and evaluating the criterion $R(u)$. Once we find the maximum point we replace the grid point $(\gamma_{i_k^{(0)}},u_{i_k^{(0)}})$ with $(\gamma_{\mathrm{opt}}(u^*),u^*)$. Note that by including the previous grid point in the search region we ensure that the likelihood is steadily increasing. We then repeat this procedure for the next peak, corresponding to another source, and so on. This procedure around a peak \emph{assumes} that other peaks were reliably estimated, which may hold only approximately. Therefore, we iterate over the $K$ peaks until convergence. In practice, the procedure converges quickly over $20-30$ iterations.

\noindent The above procedure is quite different from that in \cite{malioutov05}, in that it offers a means to improve the DoA estimate without requiring to re-run the primary procedure (here: SBL, in \cite{malioutov05}: $\ell_1$-SVD) on all grid points. As will be shown numerically in Section~\ref{sec:gridrefinestudy}, this step alone improves the solution significantly.\begin{figure*}[h]
\centering
\begin{tabular}{c@{\hskip -0.1cm}c@{\hskip -0.1cm}c}
\includegraphics[width=0.33\linewidth]{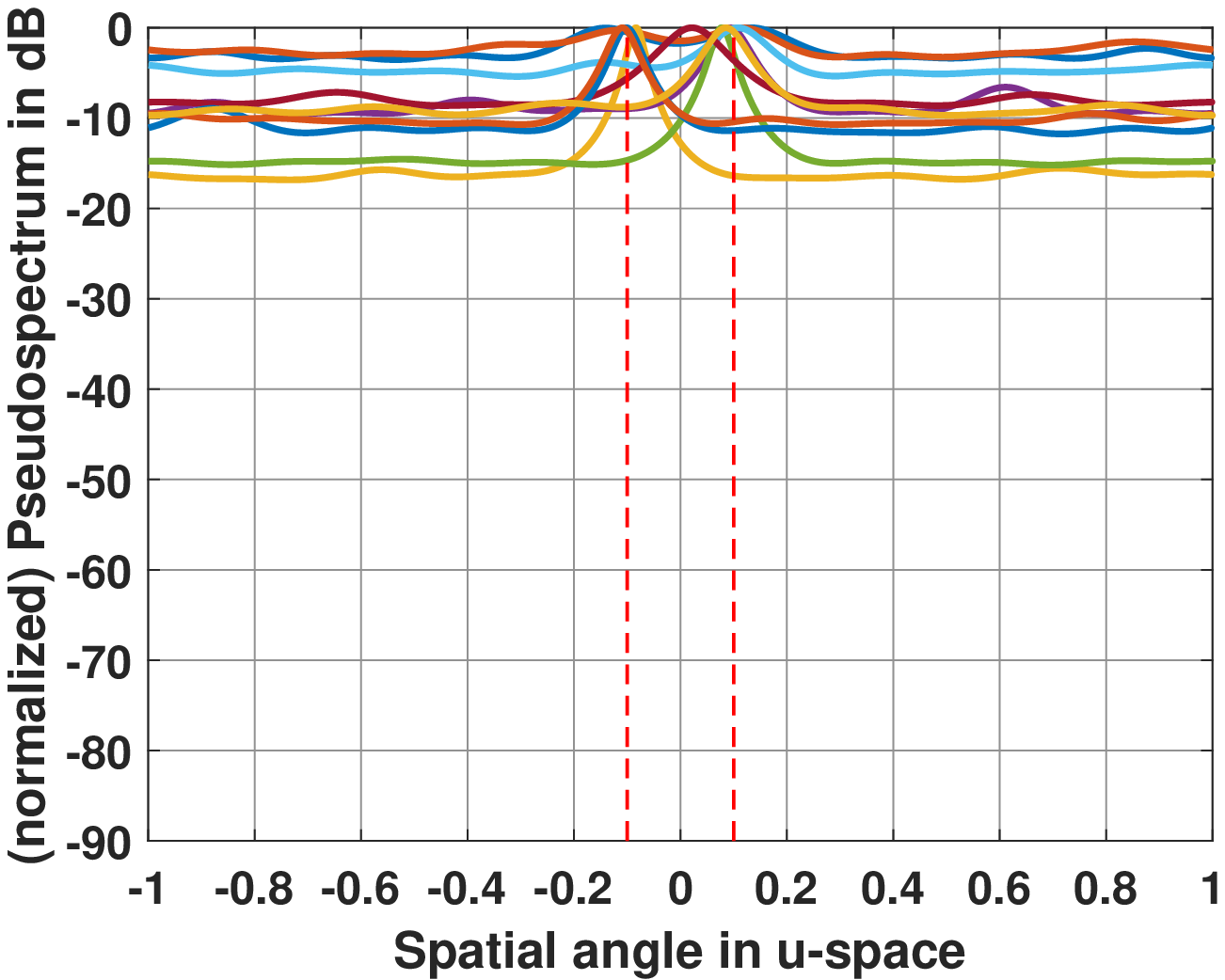} & \includegraphics[width=0.33\linewidth]{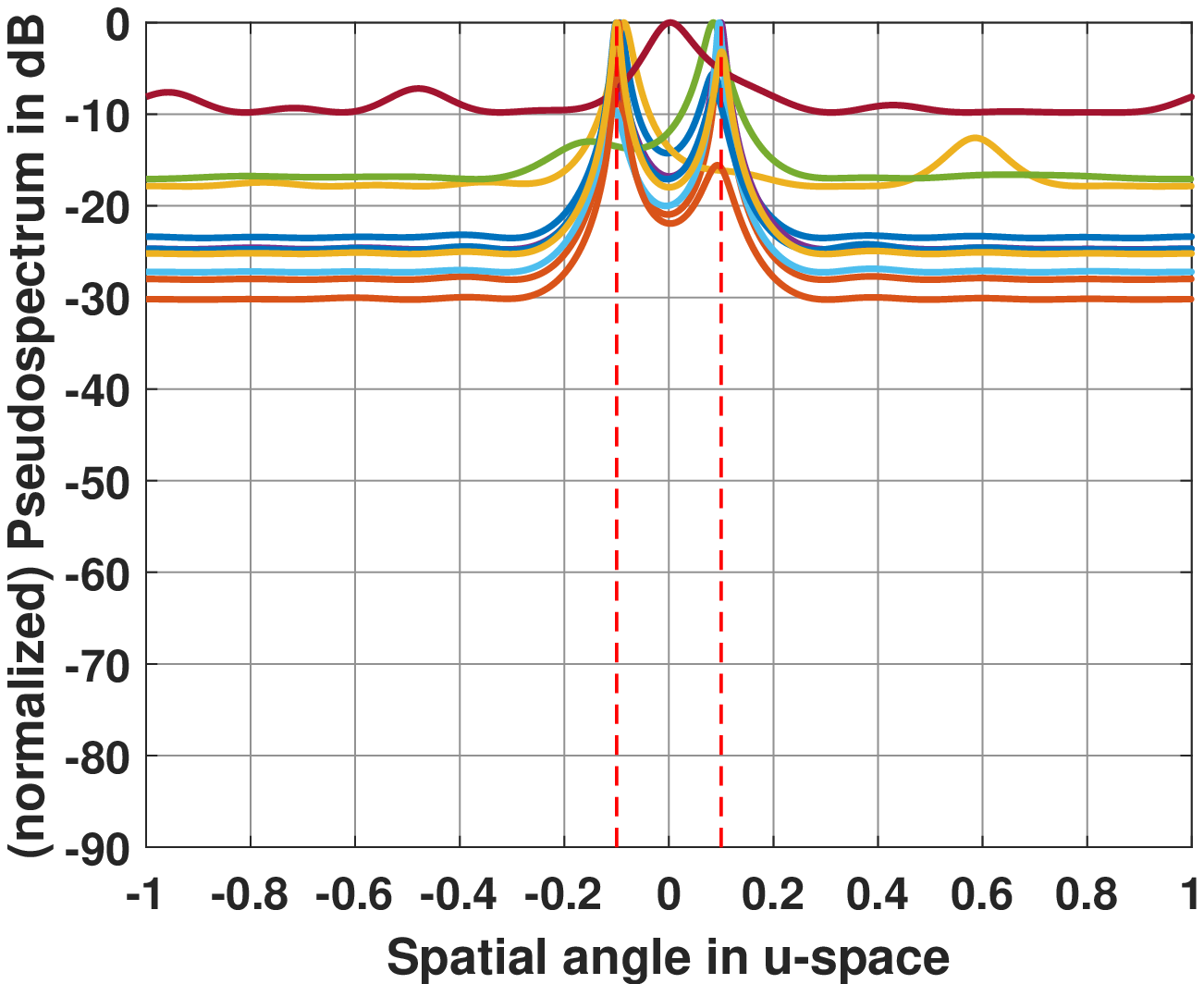} & \includegraphics[width=0.33\linewidth]{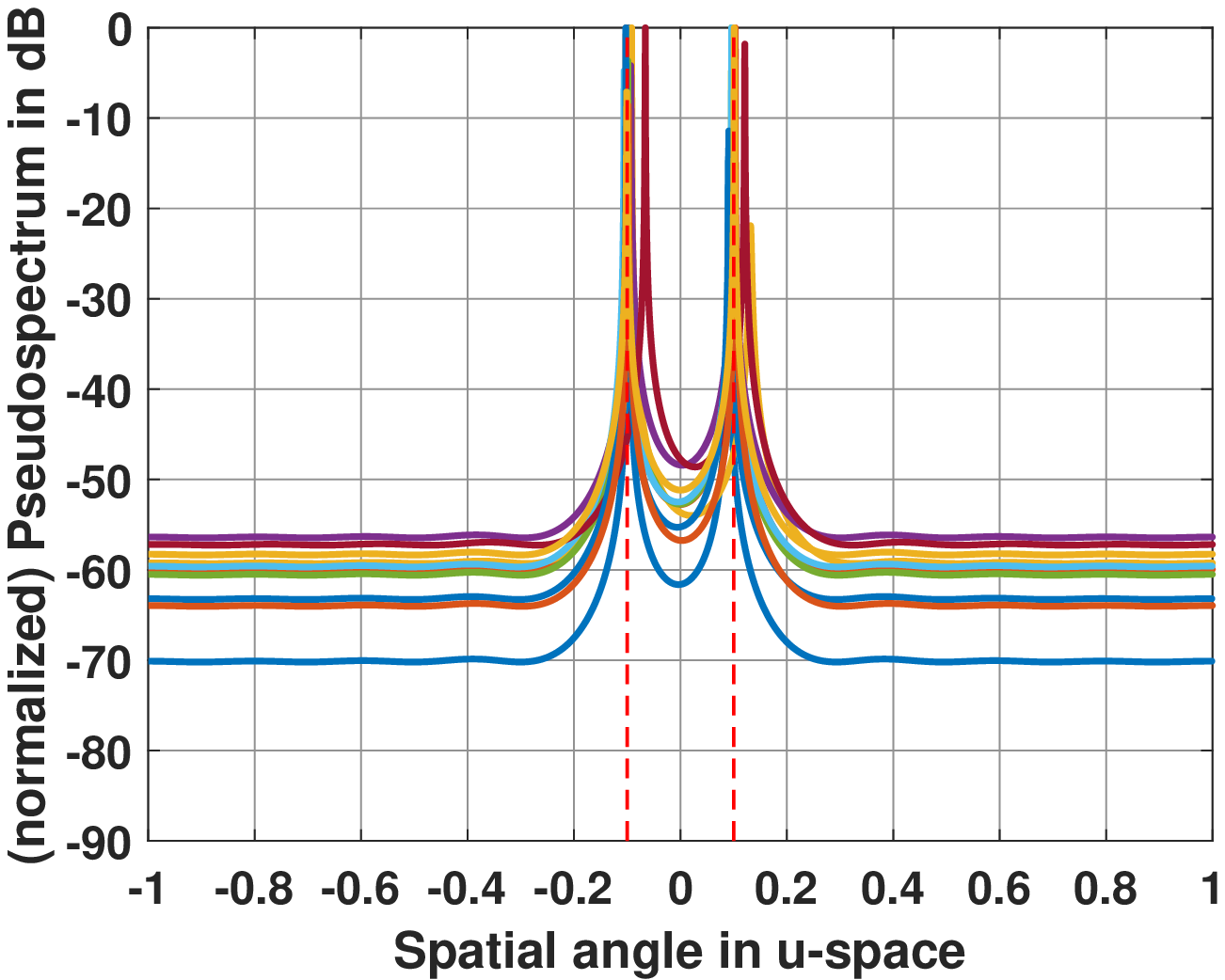}\\(a) SCM & (b) FB averaged SCM & (c) StructCovMLE (Proposed)\end{tabular}
\caption{Single snapshot scenario}\vspace{-0.5cm}\label{fig:singlesnap}
\end{figure*}\begin{figure*}[h]
\centering
\begin{tabular}{c@{\hskip -0.2cm}c@{\hskip -0.2cm}c@{\hskip -0.2cm}c}
\includegraphics[width=0.26\linewidth]{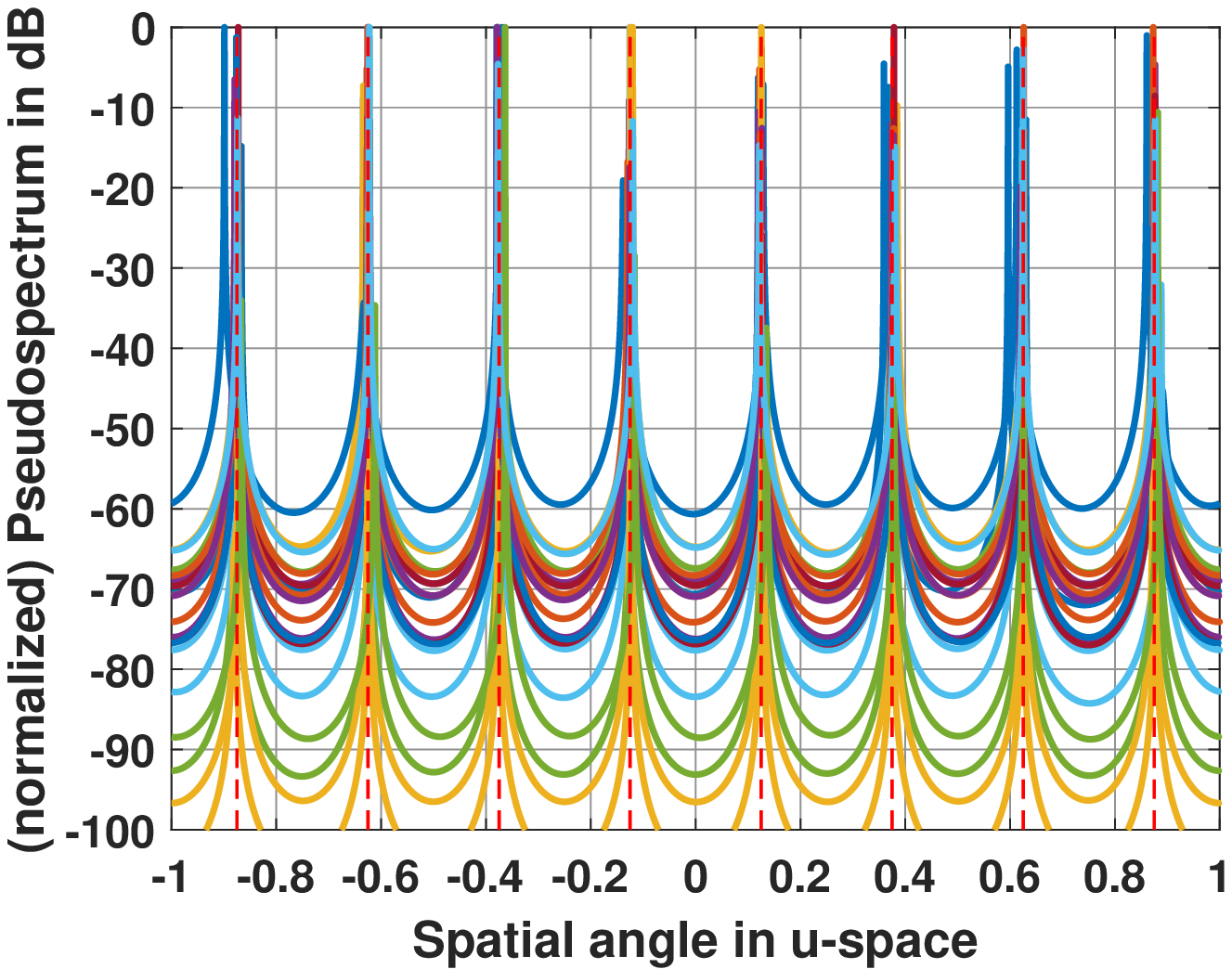}
&\includegraphics[width=0.26\linewidth]{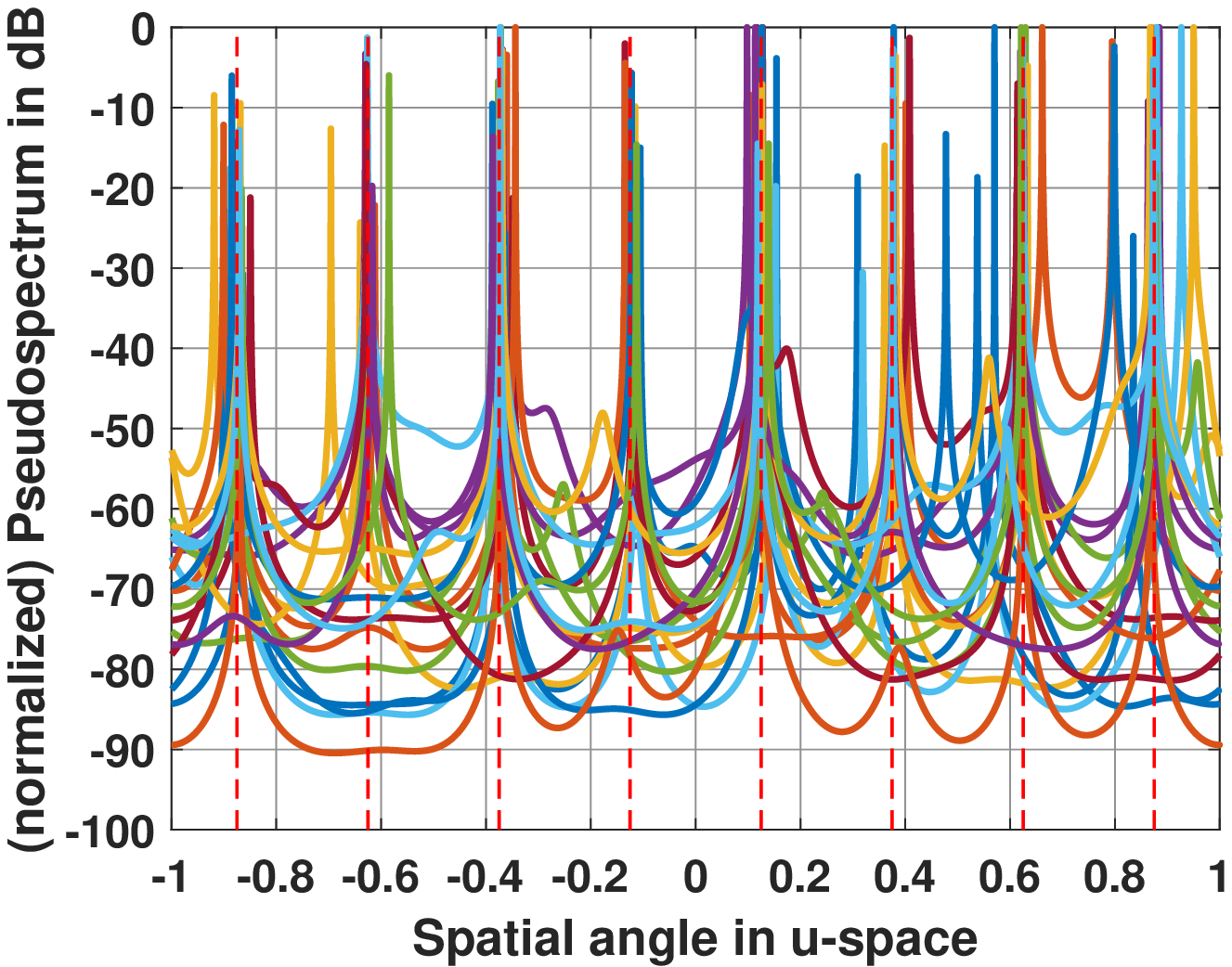} &\includegraphics[width=0.26\linewidth]{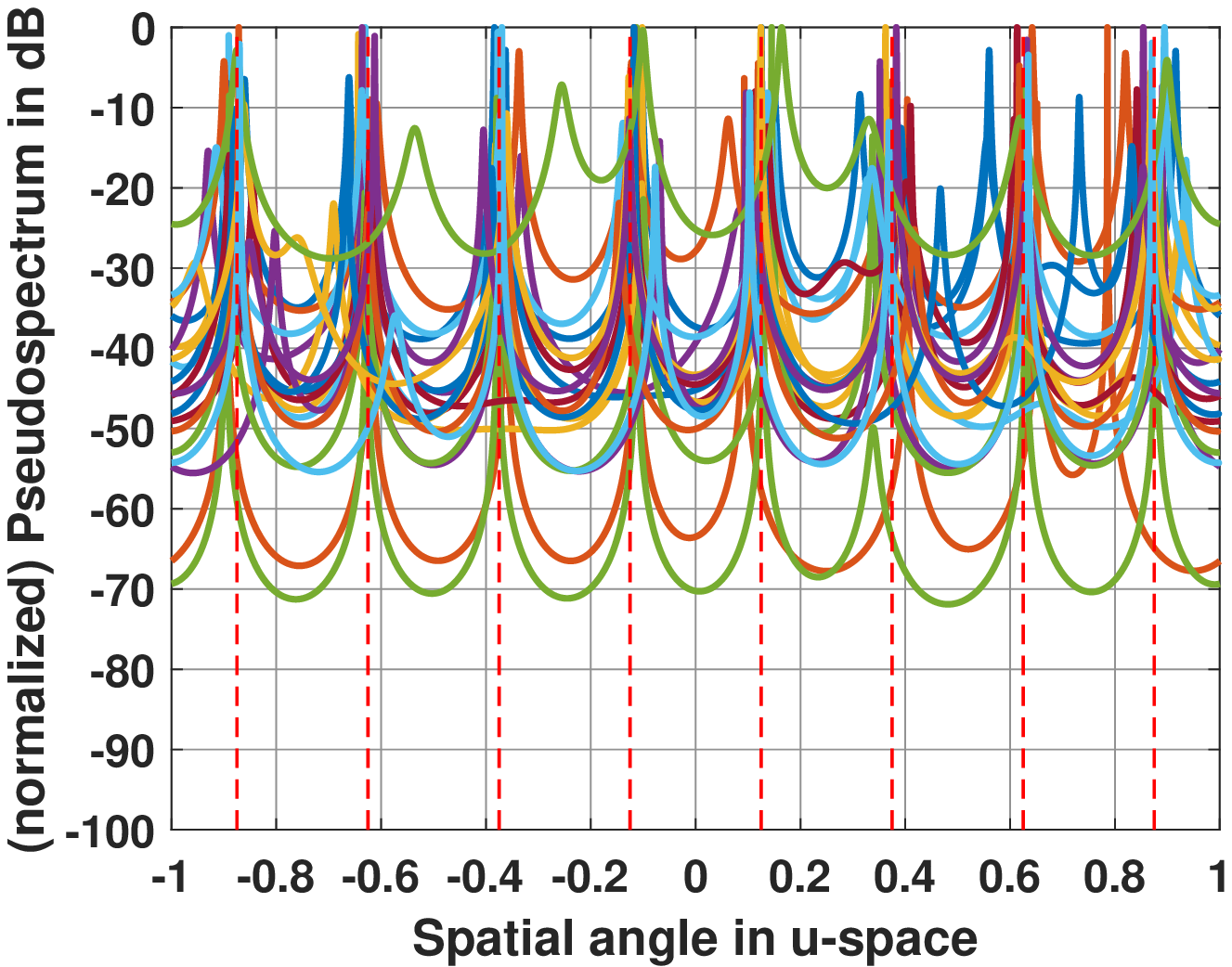} & \includegraphics[width=0.26\linewidth]{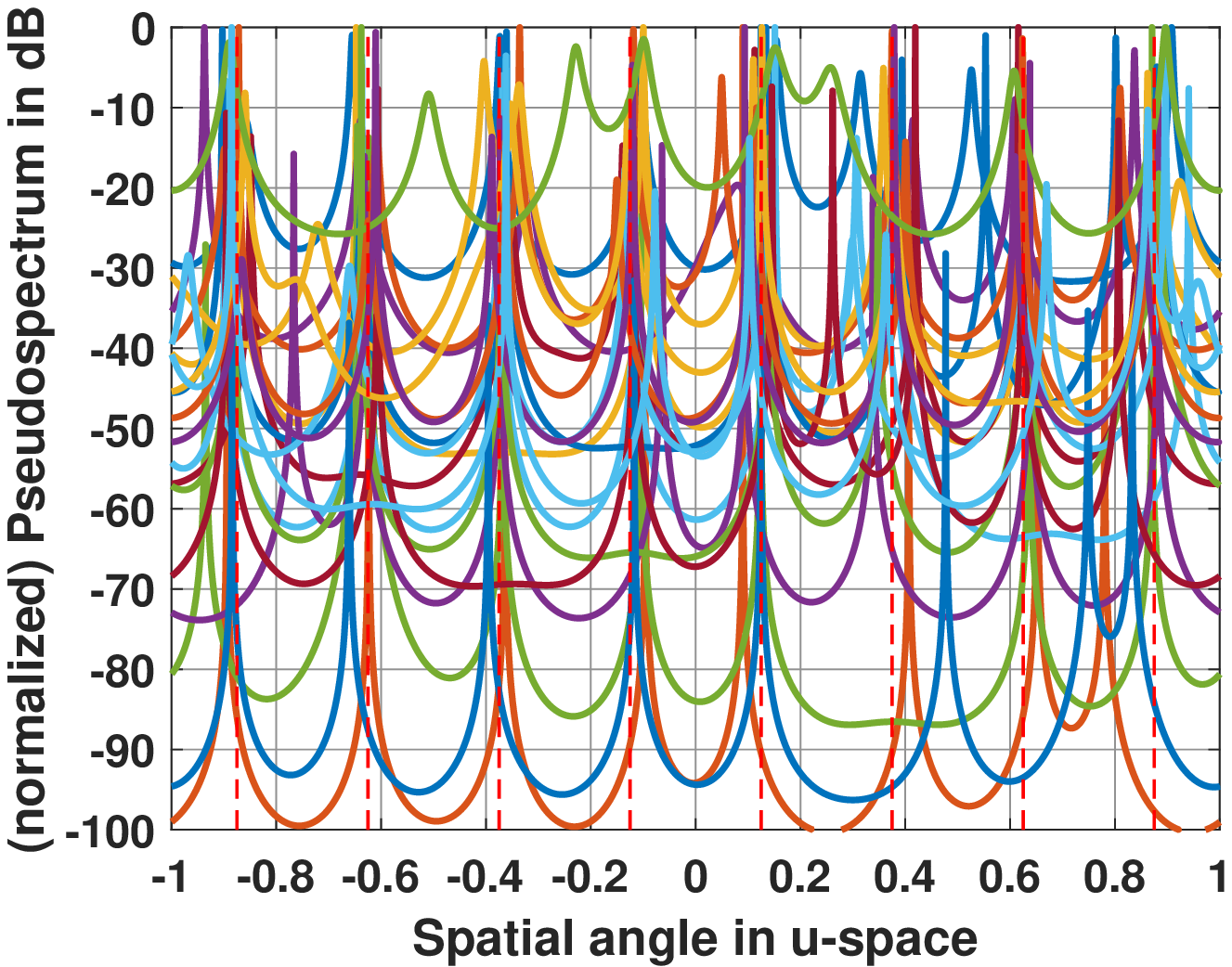}\\
\begin{tabular}{c}(a) StructCovMLE\\(RMSE = $0.005$)\end{tabular} & \begin{tabular}{c}(b) RAM\\(RMSE = $0.106$)\end{tabular} & \begin{tabular}{c}(c) GL-SPARROW\\(RMSE = $0.050$)\end{tabular} & \begin{tabular}{c}(d) GL-SPICE\\(RMSE = $0.072$)\end{tabular}\end{tabular}
\caption{More sources than sensors' case}\vspace{-0.5cm}\label{fig:moresources}
\end{figure*}\subsection{Multi-resolution Grid Refinement}
\noindent In this subsection, we take the local grid point adjustment step further by introducing a finer grid around the new peak locations and by re-running the SBL procedure. Let $r=0$ and $j_k^{(r)},k\in\{1,\ldots,K\}$, denote the indices for the $K$ top peaks in the solution $\bm{\gamma}_{\mathrm{est}}$ after grid point adjustment around the peaks, and $G^{(0)}=G$ denotes the current grid size. We run the following procedure to further refine the solution:\begin{enumerate}[leftmargin=*]
    \item Prune the grid points $i$ for $\{i:\bm{\gamma}_{\mathrm{est}}(i)<\gamma_{\mathrm{thresh}}\}$ for some $\gamma_{\mathrm{thresh}}$. In the simulations we set $\gamma_{\mathrm{thresh}}=10^{-3}$.
    \item Introduce new grid points in the region $[u_{j_k^{(0)}}-\frac{4}{G^{(0)}},u_{j_k^{(0)}}+\frac{4}{G^{(0)}}]$ with finer resolution $\frac{1}{g}\frac{2}{G^{(0)}},g>1$. The region includes two neighbouring grid points on each side. `$g$' is chosen such that the total number of grid points does not exceed $G^{(0)}$. This choice ensures that per iteration complexity is contained. In the simulations, we choose $g=3$; in general the procedure adds $(4g+1)$ new points per peak of interest.
    \item Increment $r:r+1$, and update the grid size $G^{(r)}$. Run SBL from scratch, get new set of indices for $K$ top peaks $i_k^{(r)},k\in\{1,\dots,K\}$; perform grid point adjustment at these peaks to get $j_k^{(r)},k\in\{1,\ldots,K\}$, as updated peak locations. Go to step 1).
\end{enumerate}This procedure is similar to that in \cite{malioutov05}. We run these steps a few times and report the peak points as DoA estimates in the simulation section. A natural question that arises is: \emph{why is the local grid point adjustment not enough for improved resolution and separating two closely spaced sources?} In other words, \emph{is a SBL re-run necessary?}\\\noindent A re-run improves both, the SBL with coarse grid and the grid point adjustment around peaks. For the SBL procedure, a re-run in this manner provides a much more informed sampling of the spatial coordinates with closely-spaced grid points around locations of interest. For the grid point adjustment step where the MVDR beamformer is employed, a finer grid helps to further ensure that only a single source is present in the search region of (\ref{eq:neighopt2}). This is important because the beamformer $\mathbf{w}=\frac{\mathbf{C}_{-i}^{-1}\bm{\phi}(u)}{\left(\bm{\phi}(u)^H\mathbf{C}_{-i}^{-1}\bm{\phi}(u)\right)}$ engages its degrees of freedom and attempts to null interference outside of this search region, and the criterion $R(u)$ works best only if a single source is present in the search region. A block diagram summarizing the high level steps suggested in this section for the general case of sensors being placed arbitrarily is shown in Fig.~\ref{fig:twostepproc}.\vspace{-0.2cm}
\section{Simulation Results}\label{sec:sim}
\noindent We present numerical results to evaluate the performance of the proposed algorithms in Section~\ref{sec:structcovmle} and \ref{sec:sblgdrefine} in different scenarios. We also compare the proposed `StructCovMLE' algorithm with MUSIC using SCM, MUSIC using forward-backward (FB) averaged SCM\cite{trees02}, reweighted ANM (RAM), GridLess (GL)-SPICE, GL-SPARROW and 
Cram\'er-Rao bound (CRB)\cite{stoica01}. We initialize all the iterative techniques (i.e. the proposed Algorithm~\ref{alg:glssr} and RAM) with the unit vector $\mathbf{v}_0=\mathbf{e}_1$ for reasons stated in remark~\ref{rem:initial}, and run $20$ iterations 
unless otherwise specified. We provide the number of sources, $K$, to identify to all the algorithms. We
set $\lambda=\sigma_n^2$ for the proposed algorithm. 
The RAM implementation follows its description in \cite{yang16}, and we also adopt the dimension reduction mechanism suggested in the paper. We set $\eta=\sigma_n\sqrt{ML+2\sqrt{ML}}$ as suggested for DoA estimation in \cite{yang16}. For GL-SPARROW, we set $\lambda=\sigma_n\sqrt{M\log M}$ as suggested in \cite{steffens18,wassim17}. GL-SPICE does not require or utilize the knowledge of the noise variance, $\sigma_n^2$. We compute root mean squared error (RMSE) in $u$-space ($u=\sin\theta$) as\begin{equation}
    \mbox{RMSE}=\sqrt{\frac{1}{T}\frac{1}{K}\sum_{t=1}^T\sum_{k=1}^K\left(\hat{u}_{k,t}-u_{k}\right)^2},
\end{equation}where $T$ denotes the total number of random trials.\vspace{-0.2cm}
\begin{figure*}[h]
\centering
\begin{tabular}{ccc}
\includegraphics[width=0.3\linewidth]{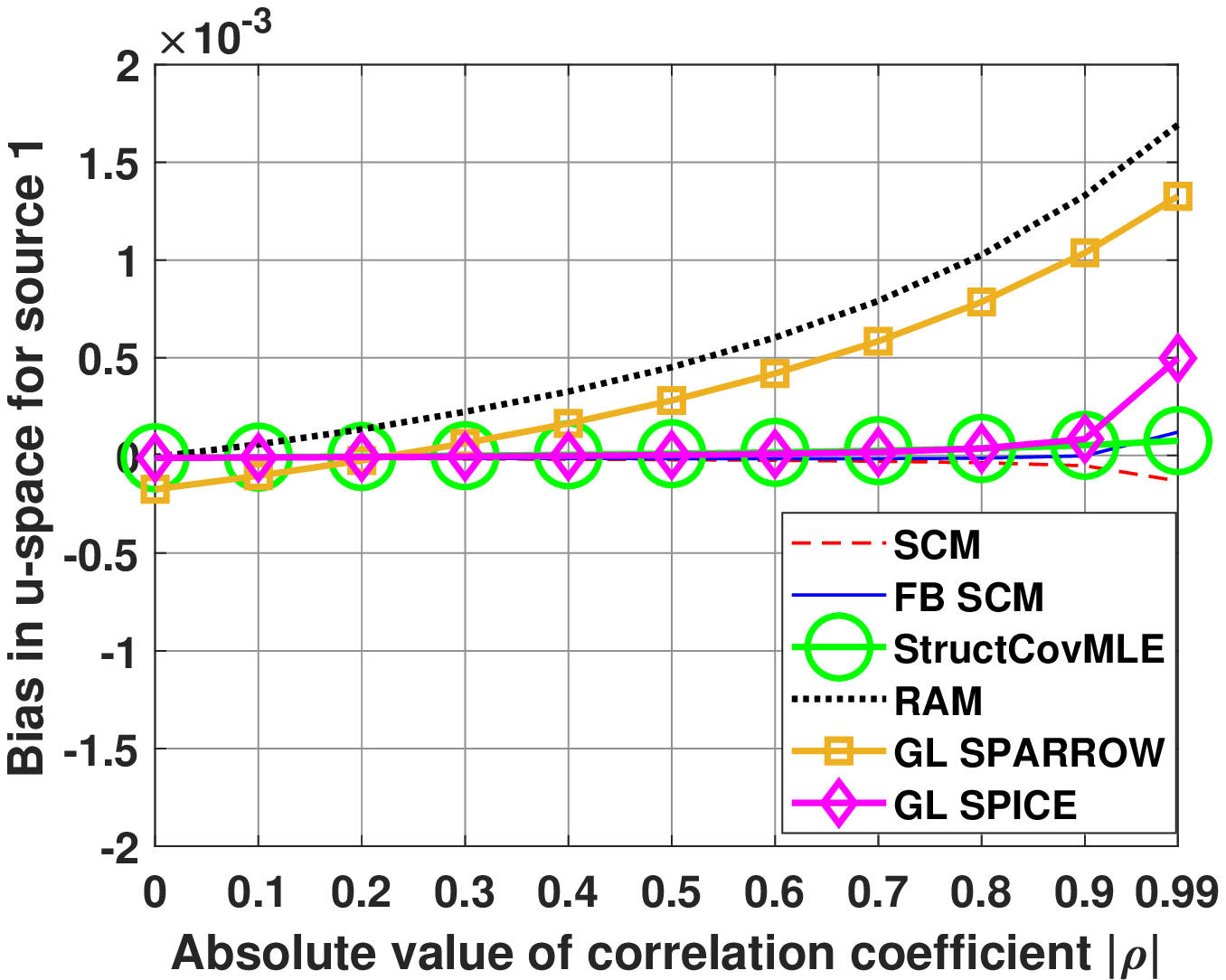} & \includegraphics[width=0.3\linewidth]{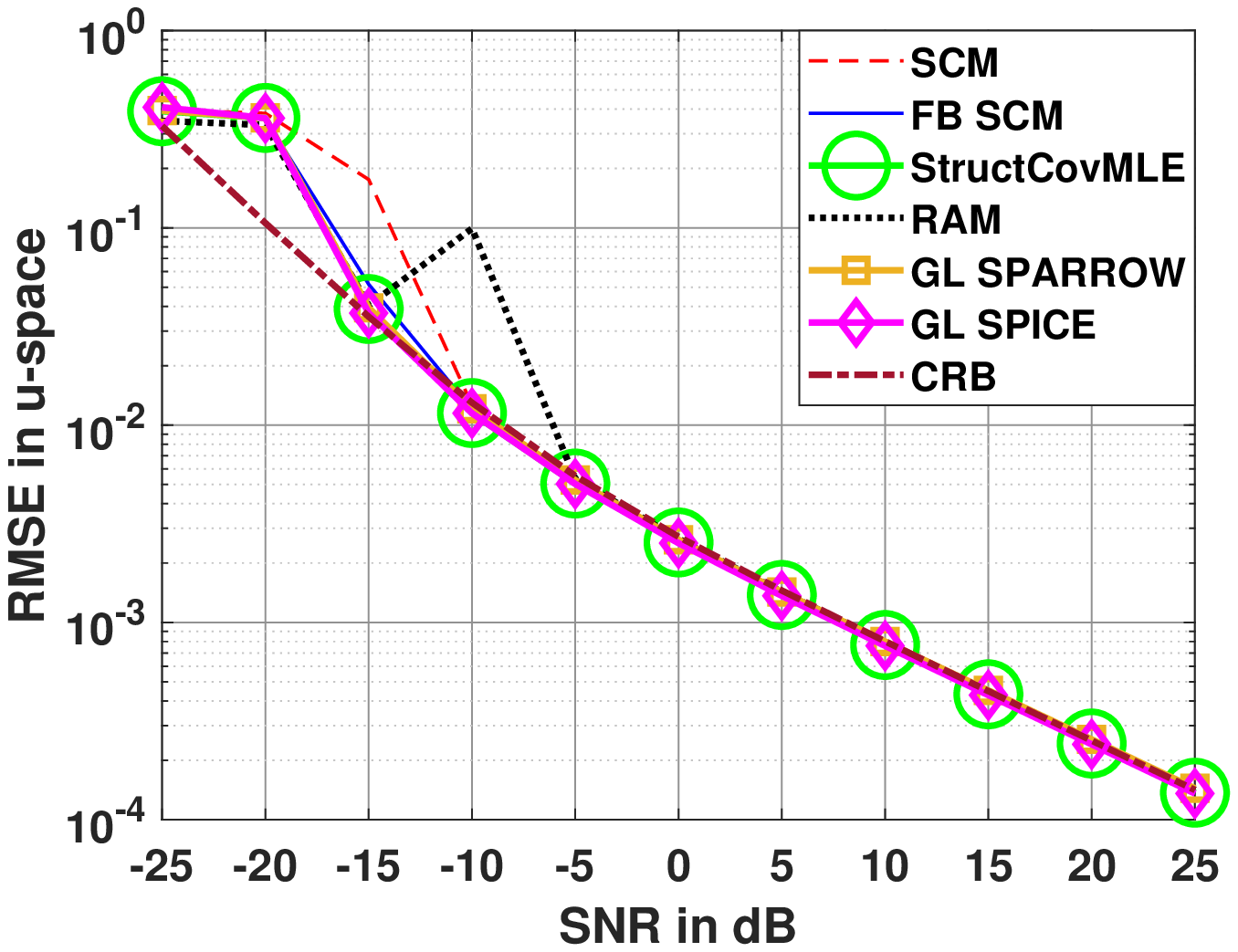} & \includegraphics[width=0.3\linewidth]{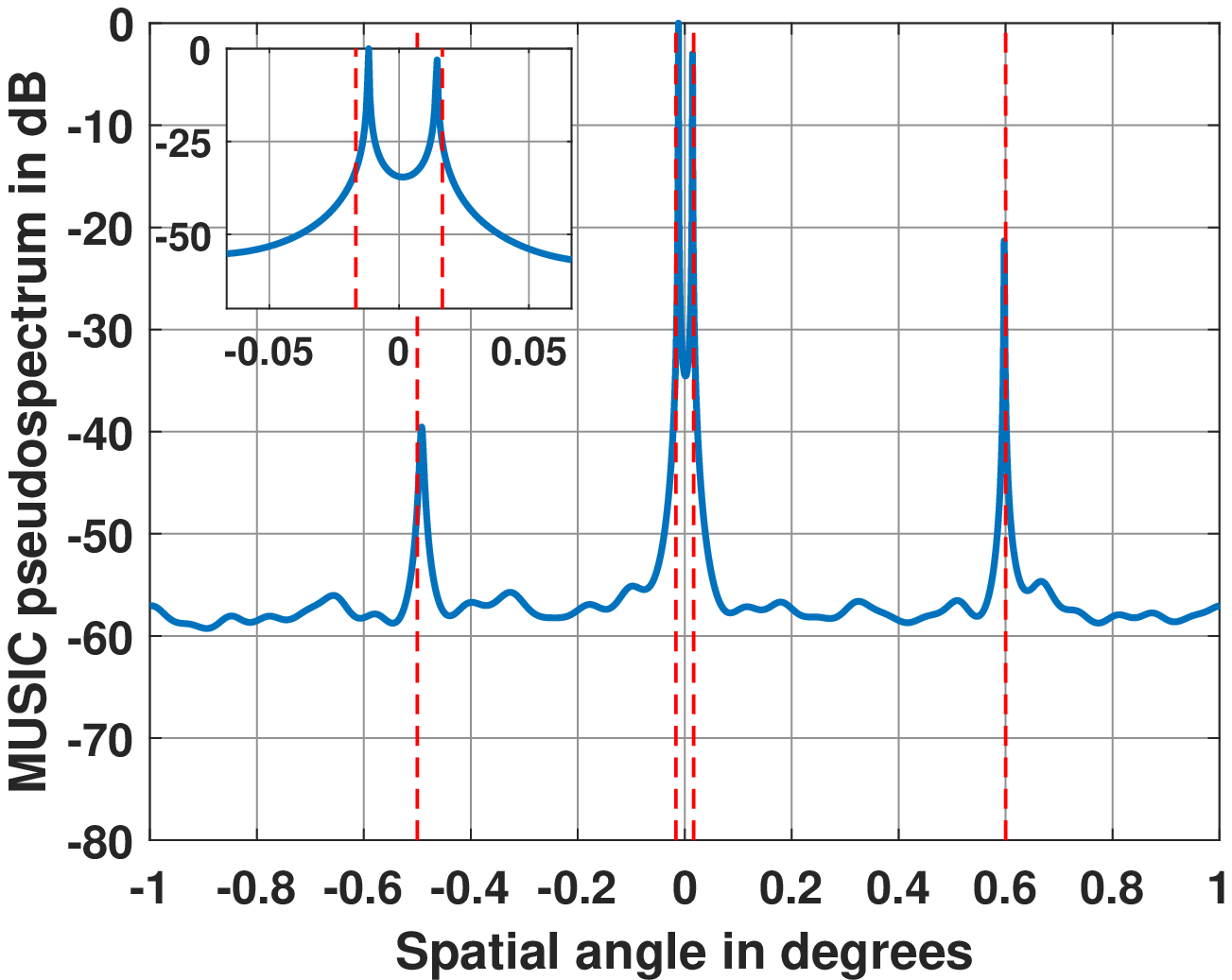}\\(a) Source 1 & (c) Uncorrelated sources & (e) StructCovMLE\\\includegraphics[width=0.3\linewidth]{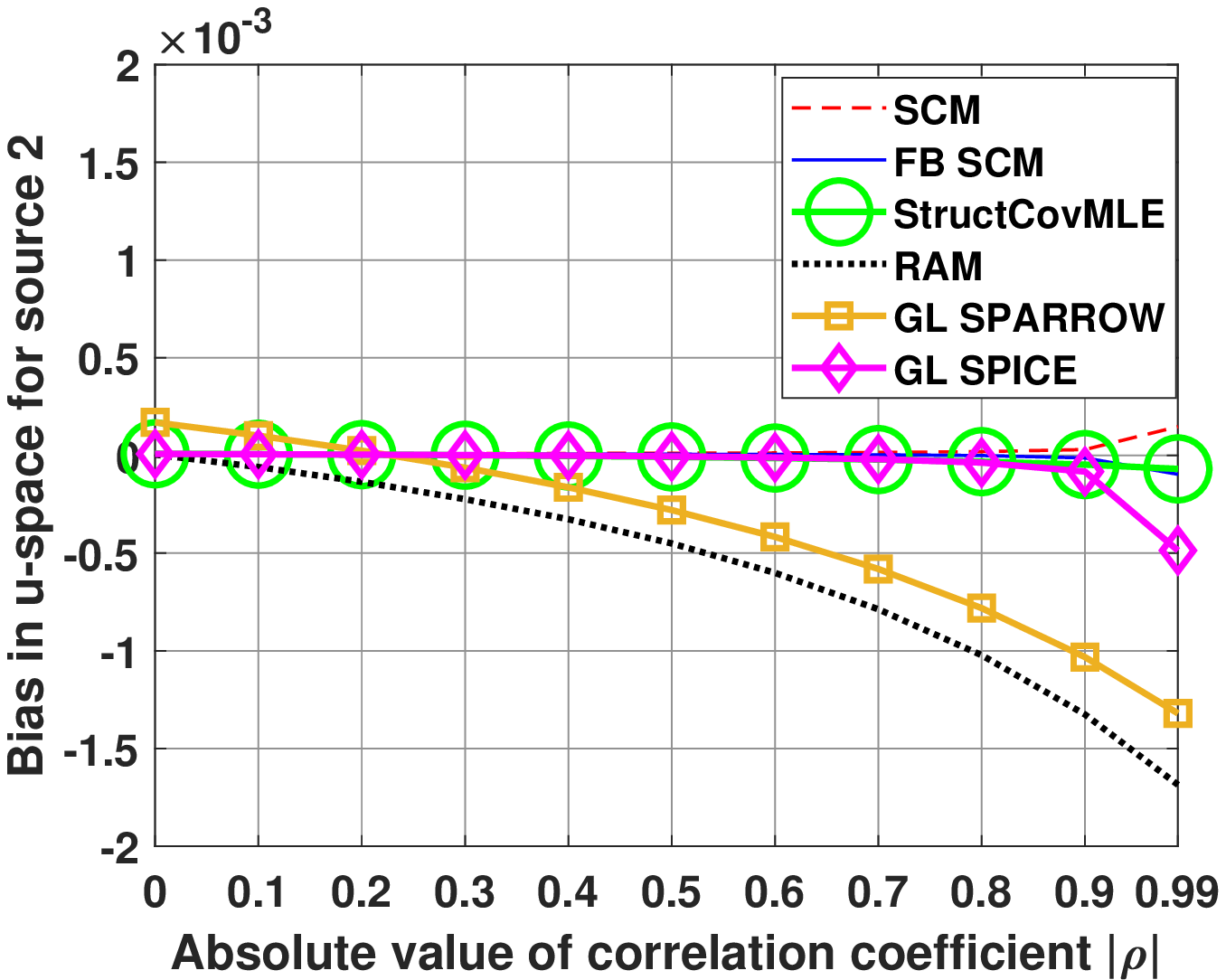} & \includegraphics[width=0.3\linewidth]{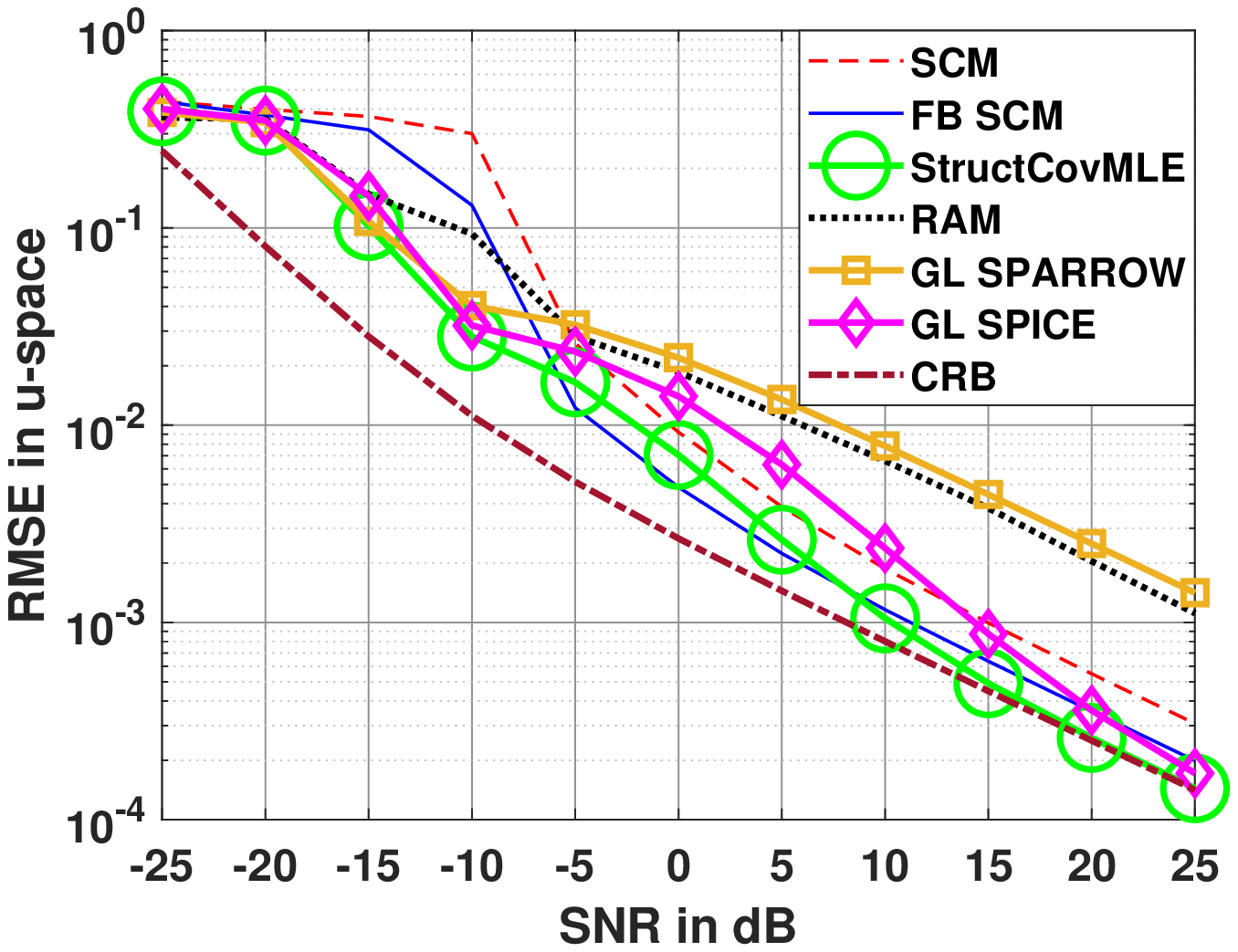} & \includegraphics[width=0.3\linewidth]{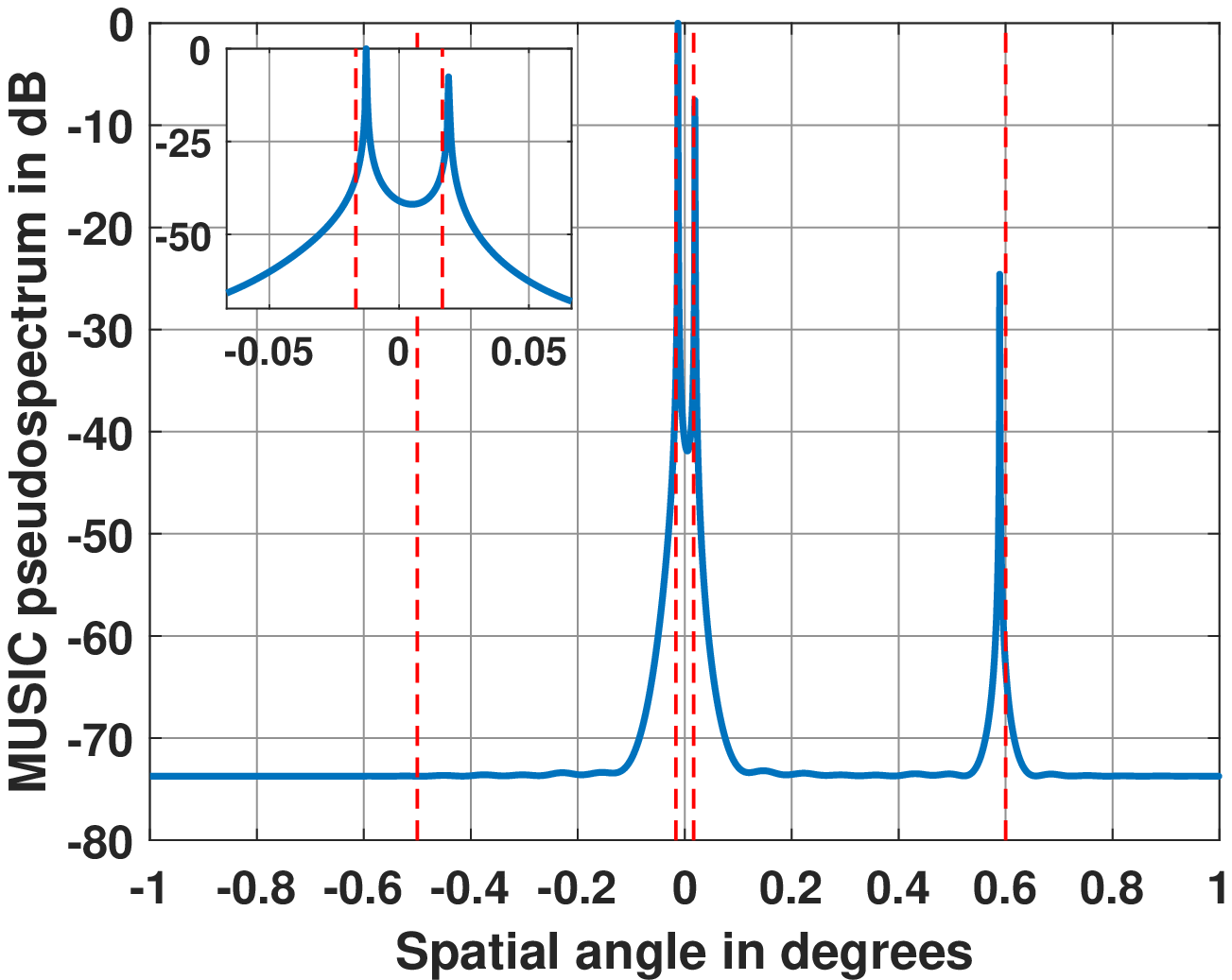}\\(b) Source 2 & (d) Correlated sources, $\vert\rho\vert=0.9$ & (f) RAM\end{tabular}
\caption{(a) \& (b): Effect of correlation ($\rho/\vert\rho\vert=0.5010+j0.8654$) on empirical bias. (c) \& (d): RMSE as a function of SNR. (e) \& (f): Nested array with sensor locations, $\mathbb{P}=\{0,1,2,3,4,5,11,17,23,29\}$. }\vspace{-0.5cm}\label{fig:combinedstudy}
\end{figure*}\subsection{Performance in Single Snapshot Case i.e., $L=1$}\label{sec:singlesnap}
\noindent We consider a ULA with $M=10$ sensors, and two sources at angles $\{-1/M_{\mathrm{apt}},1/M_{\mathrm{apt}}\},M_{\mathrm{apt}}=10$, in u-space. The sources have signal-to-noise ratio (SNR) $=20$ dB. In Fig.~\ref{fig:singlesnap} we plot the normalized (with respect to maximum value) MUSIC pseudospectrum for different estimates of the structured covariance matrix, for $10$ random realizations. The true DoAs are marked in vertical red dashed curve. As expected, the SCM provides the worst performance as it does not satisfy the rank requirement to identify two sources. It is observed that the forward-backward averaging helps to improve the performance by ensuring the rank requirement is satisfied. Still the overall performance is poor, as evident from the high noise floor compared to that for `StructCovMLE' in Fig.~\ref{fig:singlesnap} (c), and from its inability to resolve the two sources for some realizations. The proposed method is able to resolve both the sources. This is also true for RAM, GL-SPARROW, and GL-SPICE methods although we skip the plots in the interest of space.\vspace{-0.2cm}\subsection{More Sources than Sensors' Case}\label{sec:moresources}
\noindent We consider a nested array\cite{pal10} with $M=6$ sensors at locations $\{0,1,2,3,7,11\}$, and $K=8$ sources at angles uniformly in $u$-space. Their locations in MATLAB notation are $\{-1+1/K:2/K:1-1/K\}$. The SNR for each source is $20$ dB and
$L=4$. In Fig.~\ref{fig:moresources}, we plot $20$ random realizations of the normalized MUSIC pseudospectrum for the different estimates of the structured covariance matrix. As seen in Fig.~\ref{fig:moresources} (a), the proposed algorithm is able to localize all the $8$ sources, whereas the rest of the algorithms suffer from poor identifiability for some realizations. The superior performance is also evident from the lower RMSE value (in u-space) for the proposed algorithm, as compared to the other techniques.\vspace{-0.2cm}
\subsection{Effect of Correlation: An Empirical Bias Study}\label{sec:corrstudy}
\noindent We consider a ULA with $M=6$ sensors and two sources incoming at angles $\{-1/4,1/4\}$. The SNR is $20$ dB and 
$L=500$. In Fig.~\ref{fig:combinedstudy} (a) and (b) we plot the empirical bias for the two sources, respectively i.e., $\frac{1}{T}\sum_{t=1}^T\left(\hat{u}_{k,t}-u_k^*\right),k=\{1,2\}$, as a function of the absolute value of correlation coefficient, $\vert\rho\vert$. For computing the bias, we average over $T=50$ realizations. As observed in the plots, there is an increasing empirical bias in the angular estimates for the RAM and the GL-SPARROW techniques. This is evident as the curves drift away from the x-axis as $\vert\rho\vert$ increases. The proposed approach (shown in green curve with circular markers) has low empirical bias even when $\vert\rho\vert$ is as high as $0.99$. This demonstrates the superiority of the MLE based proposed approach over the other algorithms when there may be sources that are arbitrarily correlated.

\noindent Next, we provide RMSE vs. SNR curves for uncorrelated and correlated sources' case, and compare the performance with CRB. Note that in certain scenarios the algorithms may be biased, for example in extremely low SNR regime, or as evident in Fig.~\ref{fig:combinedstudy} (a) and (b) for RAM and GL SPARROW even in the high SNR scenario when the sources are correlated. We provide the curves for completion, but note that the CRB may not be a valid bound for such extreme cases.\vspace{-0.2cm}
\subsection{Performance as a Function of SNR}\label{sec:snrstudy}
\noindent We consider ULA with $M=6$ sensors and two sources at angles $\{-1/2,1/2\}$. 
$L=500$ and we run $30$ iterations for RAM and the proposed `StructCovMLE' algorithm. In Fig.~\ref{fig:combinedstudy} (c) and (d) we plot the RMSE (averaged over $T=50$ realizations) for the uncorrelated sources and correlated sources' cases, respectively, as a function of SNR. As observed in Fig.~\ref{fig:combinedstudy} (c), when the sources are uncorrelated, all the algorithms approach CRB as the SNR increases. When the sources are highly correlated and the SNR is high, $\vert\rho\vert=0.9$ in Fig.~\ref{fig:combinedstudy} (d), we observed that the performance curve for the proposed algorithm is the closest to CRB, followed by GL-SPICE. The performance curves are worse for RAM and GL-SPARROW indicating the effect of empirical bias present in the estimates.\vspace{-0.2cm}\begin{figure*}[h]
\centering
\begin{tabular}{ccc}
\includegraphics[width=0.3\linewidth]{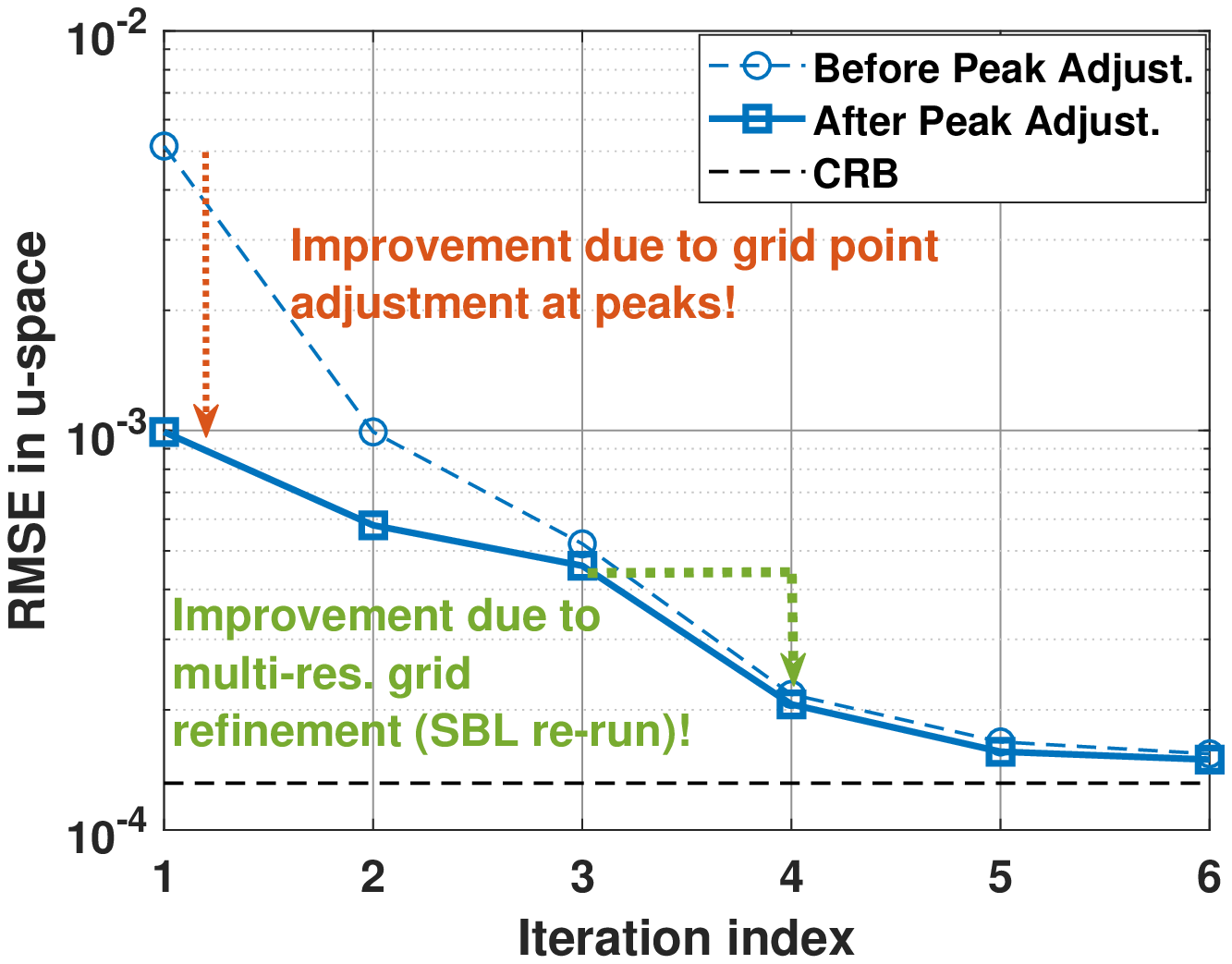} & \includegraphics[width=0.3\linewidth]{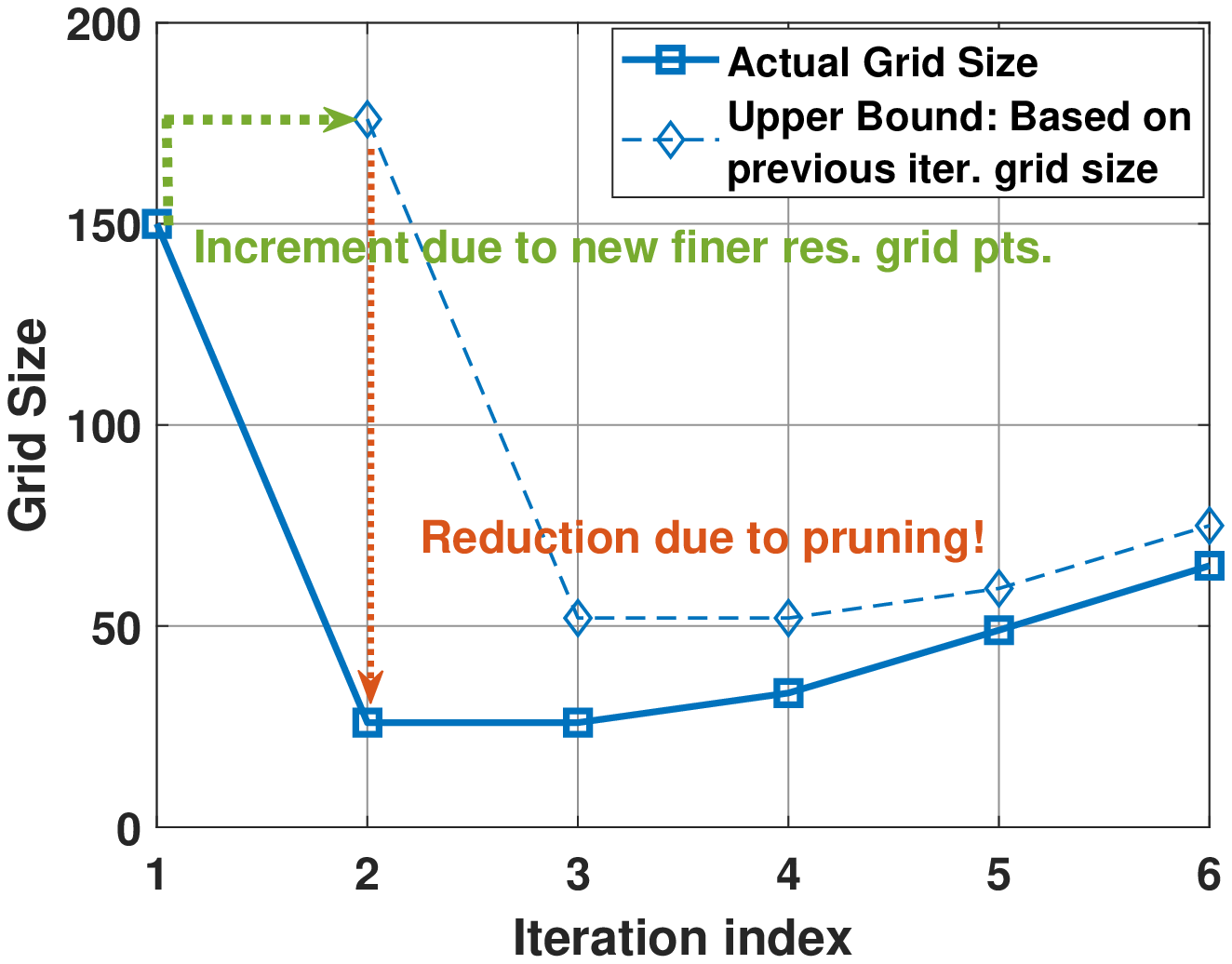} & \includegraphics[width=0.3\linewidth]{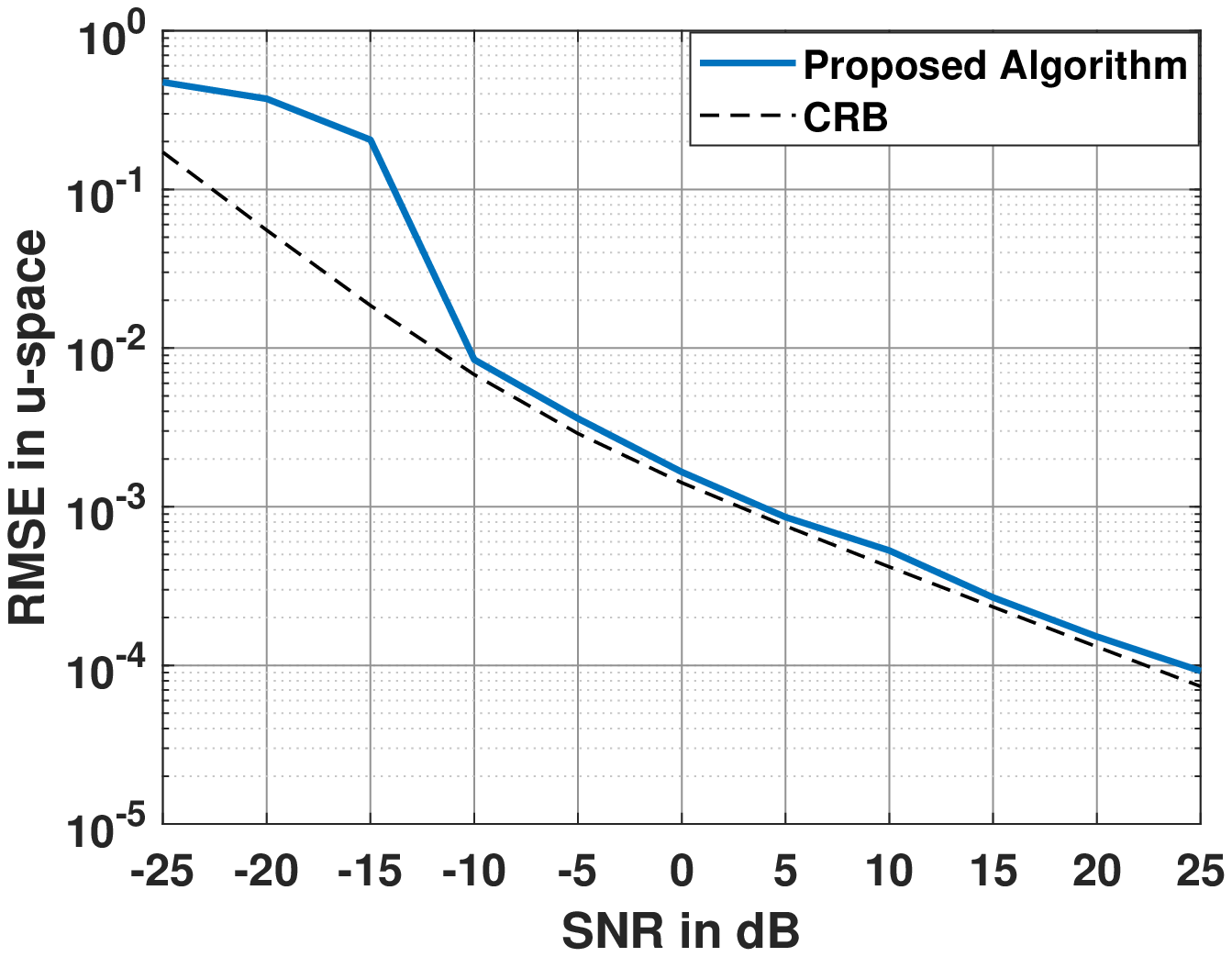}\\(a) RMSE over iterations ($r=0,1,\ldots$) & \begin{tabular}{c}(b) Grid size over iterations ($r=0,1,\ldots$)\\CRB req. grid size $=\frac{1}{\mbox{CRB}}\approx 7632$\end{tabular} & (c) RMSE vs. SNR
\end{tabular}
\caption{Non-uniform linear array: Performance of the proposed SBL with likelihood-based grid refinement procedure}\vspace{-0.5cm}\label{fig:arbitgeom}
\end{figure*}\subsection{Resolution Study and Regularization-free Proposed Approach vs RAM Study}\label{sec:combinedstudy}
\noindent We evaluate the performance of the proposed technique for resolution and compare it with RAM. We also compare the two algorithms for the case when sources have different SNRs. We consider a nested array with $M=10$ sensors, and allow $K=4$ sources incoming at angles $\{-0.5,-1/2M_{\mathrm{apt}},1/2M_{\mathrm{apt}},0.6\}$ in $u$-space, where $M_{\mathrm{apt}}=30$. The corresponding SNR for sources is $\{5,20,20,10\}$ dB and only a single snapshot ($L=1$) is available. The two sources near broadside are $1/M_{\mathrm{apt}}$ apart, or equivalently $0.5/M_{\mathrm{apt}}$ apart in normalized frequencies, which is a challenging scenario. As seen in Fig.~\ref{fig:combinedstudy} (e) and (f) for the proposed algorithm and RAM, respectively, both are able to resolve the two sources. The proposed algorithm is able to identify all $4$ sources, but RAM misses the weakest source. This behavior for RAM comes from the fact that the model is matched to an estimate of \emph{noiseless} data. In an attempt to construct such a noiseless estimate of measurement, the algorithm effectively suppressed the weakest source. It was observed that setting $\eta=0$ helped to identify all sources for RAM here. This indicates that RAM is sensitive to setting the parameter $\eta$ appropriately. Note also that the noise floor for the proposed algorithm is higher than that for RAM. This is expected because for the case of $K=4$ sources and $M=10$ sensors i.e., fewer sources than sensors, MUSIC is applied on the smaller $\mathbf{T}(\mathbf{v}^*)\in\mathbb{C}^{10\times 10}$ in the proposed algorithm, compared to MUSIC on $\T(\mathbf{v}^*)\in\mathbb{C}^{30\times 30}$ in RAM.\vspace{-0.2cm}\subsection{Performance of the Proposed SBL with Likelihood-based Grid Refinement Procedure
}\label{sec:gridrefinestudy}
\noindent We consider the array geometry mentioned in Section~\ref{sec:sblgdrefine} with $M=6$ sensors at positions $\mathbb{P}=\{0,1,2.1,3.5,4.7,10\}$, two sources at approximately $\{-0.5400,0.4802\}$ in $u$-space, and $L=500$. In Fig.~\ref{fig:arbitgeom} (a) and (b), we analyse the performance of the proposed algorithm at SNR=$20$ dB, while in (c) we consider a range of SNRs. We run $5000$ SBL iterations each time SBL is called, and the initial grid size is $G=150$. We plot the average results of $T=50$ random realizations in Fig.~\ref{fig:arbitgeom} (a) and (b) and $T=25$ random realizations in Fig.~\ref{fig:arbitgeom} (c).

\noindent In Fig.~\ref{fig:arbitgeom} (a), we plot the RMSE over the iterations ($r=0,1,\ldots$) described in Fig.~\ref{fig:twostepproc} for the proposed algorithm.
As observed here the RMSE decreases over iterations. Within each iteration (compare dashed vs. solid curve), it can be seen that the grid point adjustment step at peaks helps to reduce the error further, and thus establishes a simple way to further increase the likelihood. After $5$ iterations, it can be seen the error is very close to the CRB. In Fig.~\ref{fig:arbitgeom} (b), we plot the grid size (in solid blue curve) for running SBL at every iteration of the proposed procedure. In dashed blue curve, we plot a simple upper bound on the grid size by only counting the number of new grid points added around top peaks. This helps to compute the number of points pruned at every iteration. For reaching CRB at SNR=$20$ dB with a single SBL run and a uniform grid, we need a grid size of $G=7632$. In comparison, the proposed procedure only requires a maximum grid size of $G=150$. Note that the grid resolution around the top peaks, after $5$ iterations, is comparable to an initial grid size of $G\times g^(5-1)=150\times 3^4=12150$, which is more than enough to achieve CRB. In Fig.~\ref{fig:arbitgeom} (c), we plot the RMSE as a function of SNR. We set the maximum iterations ($<7$) of the proposed procedure so as to allow for sufficiently small grid spacing at high SNR as required to reach CRB. As seen in the plot, the RMSE approaches CRB as the SNR increases. Note that for a fixed grid size i.e., for a standard SBL procedure the RMSE is expected to saturate beyond a certain SNR.\vspace{-0.2cm}\section{Conclusion}\label{sec:conc}
\noindent In this work we revisited the problem of gridless sparse signal recovery
using MLE framework. We showed that SBL performs a structured covariance matrix estimation, where the structure is governed by the geometry of the measurement collection system (e.g. antenna array) and the (uncorrelated) source correlation prior.
We further established that SBL is a \emph{correlation-aware} technique and compared it with another class of correlation-aware techniques. Both are able to identify $O(M^2)$ sources given sparse linear array with $M$ sensors, like minimum-redundancy linear array\cite{moffet68,trees02,pillai85} and nested array. The noteworthy aspect about SBL is the underlying objective it uses, which is MLE. In the event that some of the sources are correlated, the model misfit is characterized in terms of the KL divergence between the distribution SBL assumes and the true data distribution. We reparametrized the SBL cost function to enable gridless support recovery when the sensors are placed on uniform grid and some sensors may be switched off. We provided an iterative algorithm based on linear MM to minimize the cost function and to estimate the structured covariance matrix of measurements. The DoAs can be recovered by using any off-the-shelf root-finding technique such as root-MUSIC. In this work, we also consider geometries when the sensors may be placed off the grid, and extend the SBL procedure to include a peak adjustment and grid refinement steps. Finally we compared the proposed algorithms numerically with other state-of-the-art algorithms from the literature and demonstrated the superior performance showcased by the cost function motivated by first principles, that is maximum likelihood estimation.

\noindent Several directions are open for future work. This includes, for all sensors on grid case, developing faster methods to solve the proposed `StructCovMLE' optimization problem. For the arbitrarily placed sensors' case, we feel the grid refinement based iterative SBL procedure is an important first step and opens up many interesting avenues of inquiry. For dictionaries parameterized by a  few parameters, there is hope that discretization (grid) may not be necessary upfront except as a practical computational method as in Equation (\ref{eq:neighopt2}).

\vspace{-0.2cm}\section{Appendix}\label{sec:apndx}
 \subsection{Proof of Proposition~\ref{prop:toepcon}}\label{sec:app_prop}
\begin{proof}
    The `\emph{if}' part can be proved simply using contradiction and follows by noting that if $(\mathbf{\Gamma}^{*},\lambda^{*})$ is not the global minimizer of (\ref{eq:toepcon}), then the solution for (\ref{eq:sblopt}) can be further improved.
We now prove the `\emph{only if}' part. If $\epsilon<\mathrm{tr}\left(\left(\mathbf{\Phi\Gamma}^*\mathbf{\Phi}^H+\lambda^{*}\mathbf{I}\right)^{-1}\hat{\mathbf{R}}_{\mathbf{y}}\right)$, then $(\mathbf{\Gamma}^{*},\lambda^{*})$ is infeasible, and the assertion holds trivially. If $\epsilon>\mathrm{tr}\left(\left(\mathbf{\Phi\Gamma}^*\mathbf{\Phi}^H+\lambda^{*}\mathbf{I}\right)^{-1}\hat{\mathbf{R}}_{\mathbf{y}}\right)$, then $(\mathbf{\Gamma}^{*},\lambda^{*})$ lies in the feasible region. We prove that the point $(\mathbf{\Gamma}^{*},\lambda^{*})$ can be further improved. For any two matrices $\mathbf{B},\mathbf{C}\succ\mathbf{0}$ such that $\mathbf{B}\succ\mathbf{C}$, the following holds\begin{IEEEeqnarray}{ll}
\mathrm{tr}\left((\mathbf{B}-\mathbf{C})^{-1}\hat{\mathbf{R}}_{\mathbf{y}}\right)\geq\mathrm{tr}\left(\mathbf{B}^{-1}\hat{\mathbf{R}}_{\mathbf{y}}\right)\\
\log\det\mathbf{B}>\log\det(\mathbf{B}-\mathbf{C}).\label{eq:improvsol}
\end{IEEEeqnarray}Inserting $\mathbf{B}=\mathbf{\Phi\Gamma}^*\mathbf{\Phi}^H+\lambda^{*}\mathbf{I}$ and $\mathbf{C}=\alpha\mathbf{I}$, for some $\alpha\in(0,\lambda^{*})$ in the above ensures that the conditions $\mathbf{B},\mathbf{C}\succ\mathbf{0}$ such that $\mathbf{B}\succ\mathbf{C}$ are satisfied. We choose $\alpha$ sufficiently small to ensure that the constraint $\mathrm{tr}\left((\mathbf{B}-\mathbf{C})^{-1}\hat{\mathbf{R}}_{\mathbf{y}}\right)\leq\epsilon$ is satisfied and consequently $(\mathbf{\Gamma}^{*},\lambda^{*}+\alpha)$ is feasible. Such an $\alpha$ exists because $\mathrm{tr}\left((\mathbf{B}-\mathbf{C})^{-1}\hat{\mathbf{R}}_{\mathbf{y}}\right)$ is a) continuous w.r.t. $\alpha$ in $(0,\lambda^{*})$ and b) right continuous at $\alpha=0$ with $\mathrm{tr}\left(\mathbf{B}^{-1}\hat{\mathbf{R}}_{\mathbf{y}}\right)<\epsilon$ as assumed. For such an $\alpha$, as evident from (\ref{eq:improvsol}), $(\mathbf{\Gamma}^{*},\lambda^{*}+\alpha)$ further improves the solution, and thus $(\mathbf{\Gamma}^{*},\lambda^{*})$ does not globally minimize (\ref{eq:toepcon}) if $\epsilon>\mathrm{tr}\left(\left(\mathbf{\Phi\Gamma}^*\mathbf{\Phi}^H+\lambda^{*}\mathbf{I}\right)^{-1}\hat{\mathbf{R}}_{\mathbf{y}}\right)$. This concludes the proof.\vspace{-0.3cm}
\end{proof}
\subsection{Proof of Theorem~\ref{thm:mleeqprop}}\label{prf:mleeqprop}
\begin{proof}The cost functions in (\ref{eq:mlcost3}) and (\ref{eq:mlcostKopt}) are identical, except for the received signal covariance matrix model. The optimization variables affect their cost only through the covariance matrix. Thus, the two problems are equivalent if the effective matrix search domains, up to an additional `$+\tilde{\lambda}\mathbf{I}$' ($\tilde{\lambda}\geq 0$) term, are same. Let $\mathcal{D}_1$ denote the matrix search region spanned by $\mathbf{T}(K,\bm{\theta},\mathbf{P})=\mathbf{\Phi}_{\bm{\theta}} \mathbf{P\Phi}_{\bm{\theta}}^H
$ in (\ref{eq:mlcostKopt}), and $\mathcal{D}_2$ for $\mathbf{T}(\mathbf{v})
$ in (\ref{eq:mlcost3}), where the domain for the parameters are indicated in the respective problems. To prove $\mathcal{D}_1\subseteq\mathcal{D}_2$: Let $\mathbf{T}(K',\bm{\theta}',\mathbf{P}')\in\mathcal{D}_1$ for some $(K',\bm{\theta}',\mathbf{P}')$, then the construction $\mathbf{v}'=\mathbf{T}^{-1}(\mathbf{\Phi}_{\bm{\theta}',\mathrm{ULA}}\mathbf{P}'\mathbf{\Phi}_{\bm{\theta}',\mathrm{ULA}}^H)$\footnote{$\mathbf{\Phi}_{\bm{\theta}',\mathrm{ULA}}$ denotes the array manifold matrix for a ULA of size $M_{\mathrm{apt}}$.
} ensures that $\T(\mathbf{v}')\succeq\mathbf{0}$ and $\mathbf{T}(\mathbf{v}')=\mathbf{T}(K',\bm{\theta}',\mathbf{P}')$, i.e., $\mathbf{T}(K',\bm{\theta}',\mathbf{P}')\in\mathcal{D}_2$. This concludes $\mathcal{D}_1\subseteq\mathcal{D}_2$. To prove $\mathcal{D}_2\subseteq\mathcal{D}_1$: Let $\mathbf{T}(\mathbf{v}'')\in\mathcal{D}_2$ for some $\mathbf{v}''$, then we have $\T(\mathbf{v}'')\succeq\mathbf{0}$. We skip the case when $\T(\mathbf{v}'')$ is low rank as it follows simply from unique Vandermonde decomposition. If $\T(\mathbf{v}'')$ is full rank, then it uniquely decomposes as $\mathbf{\Phi}_{\bm{\theta}'',\mathrm{ULA}}\mathbf{P}''\mathbf{\Phi}_{\bm{\theta}'',\mathrm{ULA}}^H+\lambda''\mathbf{I}$, for some $(\bm{\theta}'',\mathbf{P}'',\lambda''>0)$, where the corresponding $K''<M_{\mathrm{apt}}$ \cite{stoicabook05}. This ensures that $\mathbf{\Phi}_{\bm{\theta}''}\mathbf{P}''\mathbf{\Phi}_{\bm{\theta}''}^H+\lambda''\mathbf{I}=\mathbf{T}(\mathbf{v}'')$, which are equal up to the additional `$+\lambda''\mathbf{I}$' term.
This concludes that $\mathcal{D}_2\subseteq\mathcal{D}_1$.\vspace{-0.3cm}  
\end{proof}
\subsection{Proof of Theorem~\ref{thm:sbleqprop}}\label{prf:sbleqprop}
\begin{proof}
Similar to the proof for Theorem~\ref{thm:mleeqprop}, we conclude that the two problems in (\ref{eq:mlcost3}) and (\ref{eq:sbloptphiopt}) are equivalent if the effective matrix search domains, up to an additional `$+\tilde{\lambda}\mathbf{I}$' ($\tilde{\lambda}\geq 0$) term, are same. Let $\mathcal{D}_1$ denote the matrix search region spanned by $\mathbf{T}(\mathbf{\Phi},\bm{\Gamma})=\mathbf{\Phi}\bm{\Gamma}\mathbf{\Phi}^H
$ in (\ref{eq:sbloptphiopt}), and $\mathcal{D}_2$ for $\mathbf{T}(\mathbf{v})
$ in (\ref{eq:mlcost3}), where the domain for the parameters are indicated in the respective problems. To prove $\mathcal{D}_1\subseteq\mathcal{D}_2$: Let $\mathbf{T}(\mathbf{\Phi}',\bm{\Gamma}')\in\mathcal{D}_1$ for some $(\mathbf{\Phi}',\bm{\Gamma}')$, then the construction $\mathbf{v}'=\mathbf{T}^{-1}(\mathbf{\Phi}_{\mathrm{ULA}}'\bm{\Gamma}'\mathbf{\Phi}_{\mathrm{ULA}}'^H)$\footnote{$\mathbf{\Phi}_{\mathrm{ULA}}$ denotes the overcomplete dictionary for a ULA of size $M_{\mathrm{apt}}$, evaluated at same grid points as $\mathbf{\Phi}$.
} ensures that $\T(\mathbf{v}')\succeq\mathbf{0}$ and $\mathbf{T}(\mathbf{v}')=\mathbf{T}(\mathbf{\Phi}',\bm{\Gamma}')$, i.e., $\mathbf{T}(\mathbf{\Phi}',\bm{\Gamma}')\in\mathcal{D}_2$. This concludes $\mathcal{D}_1\subseteq\mathcal{D}_2$. To prove $\mathcal{D}_2\subseteq\mathcal{D}_1$: Let $\mathbf{T}(\mathbf{v}'')\in\mathcal{D}_2$ for some $\mathbf{v}''$, then we have $\T(\mathbf{v}'')\succeq\mathbf{0}$. Using Vandermonde decomposition we get $\T(\mathbf{v}'')=\mathbf{\Phi}_{\bm{\theta}'',\mathrm{ULA}}\mathbf{P}''\mathbf{\Phi}_{\bm{\theta}'',\mathrm{ULA}}^H$ for some $(\bm{\theta}'',\mathbf{P}''\succ\mathbf{0})$ which may not be unique. This decomposition leads to a valid dictionary $\mathbf{\Phi}''=\mathbf{\Phi}_{\bm{\theta}''}$ and diagonal source covariance matrix $\bm{\Gamma}''=\mathbf{P}''$. This concludes that $\mathcal{D}_2\subseteq\mathcal{D}_1$.\vspace{-0.3cm}
\end{proof}
\bibliographystyle{IEEEbib}
\bibliography{strings,refs}
\end{document}